  \newcommand{\emp}[1]{\emph{\textbf{#1}}}
\definecolor{forestgreen}{rgb}{0.13, 0.55, 0.13}
\newcommand{\rctreetracing}{{\textsc{RCTreeTracing}}\xspace}
\newcommand{\rctt}{\ensuremath{\mathsf{RCTT}}\xspace}
\newcommand{\paruf}{\ensuremath{\mathsf{ParUF}}\xspace}
\newcommand{\sequf}{\ensuremath{\mathsf{SeqUF}}\xspace}
\newcommand{\spine}{\ensuremath{\mathsf{spine}}\xspace}
\newcommand{\bottomupdend}{{\textsc{SLD-Activation}}\xspace}
\newcommand{\sldmerge}{{\textsc{SLD-Merge}}\xspace}
\newcommand{\sldtreecon}{{\textsc{SLD-TreeContraction}}\xspace}
\def \compress {\ensuremath{\mathtt{{compress}}}}
\def \rake {\ensuremath{\mathtt{{rake}}}}
\def \rcnode {\mathtt{rcnode}}
\newcommand{\CAS}{{\textsc{CAS}}\xspace}
\newcommand{\ati}{{\textsc{atomic\_inc}}}
\newcommand{\R}{\mathbb{R}}
\newcommand{\ifconference}{{{\ifx\fullversion\undefined}}}
\def\dfnt@space@setup{%
\dfnt@preskip=\parskip
  \dfnt@postskip=0pt}
\newtheoremstyle{exampstyle}
{.05in} 
{.05in} 
{} 
{.5em} 
{\sc \bfseries} 
{.} 
{.5em} 
{} 
\theoremstyle{exampstyle} 
\theoremstyle{exampstyle} 
\renewenvironment{proof}[1][\proofname]{\par
\vspace{-\topsep}
\pushQED{\qed}%
\normalfont
\topsep0pt \partopsep0pt 
\trivlist
\item[\hskip\labelsep
      \itshape
  #1\@addpunct{.}]\ignorespaces
}{%
\popQED\endtrivlist\@endpefalse
\addvspace{3pt plus 3pt} 
}
 \crefname{section}{Sec.}{Sec.}
 \crefname{theorem}{Thm.}{Thm.}
 \crefname{lemma}{Lem.}{Lem.}
 \crefname{corollary}{Col.}{Col.}
 \crefname{table}{Tab.}{Tabs.}
 \crefname{algorithm}{Alg.}{Algs.}
 \Crefname{table}{Tab.}{Tabs.}
\crefname{claim}{Claim}{Claim}
\begin{document}
\fancyhead{}
\newtheorem{claim}[theorem]{Claim}
\title{Optimal Parallel Algorithms for \texorpdfstring{\\}{}
Dendrogram Computation and Single-Linkage Clustering}

\author{Laxman Dhulipala}
\affiliation{
    \institution{University of Maryland}
   \city{College Park}
   \state{MD}
   \country{USA}
}
\email{laxman@umd.edu}

\author{Xiaojun Dong}
\affiliation{
    \institution{University of California}
   \city{Riverside}
   \state{CA}
   \country{USA}
}
\email{xdong038@ucr.edu}

\author{Kishen N Gowda}
\affiliation{
    \institution{University of Maryland}
   \city{College Park}
   \state{MD}
   \country{USA}
}
\email{kishen19@cs.umd.edu}

\author{Yan Gu}
\affiliation{
    \institution{University of California}
   \city{Riverside}
   \state{CA}
   \country{USA}
}
\email{ygu@cs.ucr.edu}


\begin{abstract}
Computing a Single-Linkage Dendrogram (SLD) is a key step in the 
classic single-linkage hierarchical clustering algorithm.
Given an input edge-weighted tree $T$, the SLD of $T$ is a binary
dendrogram that summarizes the $n-1$ clusterings 
obtained by contracting the edges of $T$ in order of weight.
Existing algorithms for computing the SLD all require $\Omega(n\log n)$
work where $n = |T|$. Furthermore, to the best of our knowledge no
prior work provides a parallel algorithm obtaining non-trivial speedup
for this problem.

In this paper, we design faster parallel algorithms for computing SLDs
both in theory and in practice based on new structural results about SLDs.
In particular, we obtain a deterministic output-sensitive parallel algorithm based 
on parallel tree contraction that requires $O(n \log h)$ work and 
$O(\log^2 n \log^2 h)$ depth, where $h$ is the height of the output SLD.
We also give a deterministic bottom-up algorithm for the problem inspired by 
the nearest-neighbor chain algorithm for hierarchical agglomerative 
clustering, and show that it achieves $O(n\log h)$ work and $O(h \log n)$ depth.
Our results are based on a novel divide-and-conquer 
framework for building SLDs, inspired by 
divide-and-conquer algorithms for Cartesian trees.
Our new algorithms can quickly compute the SLD on billion-scale 
trees, and obtain up to 150x speedup over the highly-efficient Union-Find 
algorithm typically used  to compute SLDs in practice.

\end{abstract}


\hide{
\begin{CCSXML}
<ccs2012>
    <concept>
       <concept_id>10003752.10003809.10010170</concept_id>
       <concept_desc>Theory of computation~Parallel algorithms</concept_desc>
       <concept_significance>500</concept_significance>
       </concept>
   <concept>
       <concept_id>10003752.10003809.10010170.10010171</concept_id>
       <concept_desc>Theory of computation~Shared memory algorithms</concept_desc>
       <concept_significance>500</concept_significance>
       </concept>
   <concept>
       <concept_id>10003752.10003809.10003635</concept_id>
       <concept_desc>Theory of computation~Graph algorithms analysis</concept_desc>
       <concept_significance>300</concept_significance>
       </concept>
 </ccs2012>
\end{CCSXML}

\ccsdesc[500]{Theory of computation~Shared memory algorithms}
\ccsdesc[500]{Theory of computation~Parallel algorithms}
\ccsdesc[500]{Theory of computation~Graph algorithms analysis}

\keywords{Single-Linkage Clustering, Hierarchical Graph Clustering, HAC, Dendrograms, Parallel Algorithms}
}

\renewcommand\footnotetextcopyrightpermission[1]{} 


\maketitle

\section{Introduction}
Single-linkage clustering is a fundamental technique in unsupervised
learning and data mining that groups objects based on a similarity (or
dissimilarity) function~\cite{schutze2008introduction}.
The single-linkage clustering of an edge-weighted tree $T$ is defined
as the tree of clusters (a dendrogram) obtained by sequentially
merging the edges of $T$ in decreasing (increasing) order of
similarity (dissimilarity) as follows:
\begin{enumerate}[label=(\arabic*),topsep=0pt,itemsep=0pt,parsep=0pt,leftmargin=20pt]
\item place each vertex of $T$ in its own cluster
\item sort the edges of $T$ by weight
\item for each edge $(u,v)$ in sorted order, merge the clusters of
$u$ and $v$ to form a new cluster.
\end{enumerate}
The output of this clustering process is called
the \defn{single-linkage dendrogram (SLD)}. The SLD is a binary tree
of clusters, where the vertices of $T$ are the leaf clusters, and the
internal nodes correspond to merging two clusters by contracting an edge.
The dendrogram enables users to easily process, visualize, and analyze
the $n-1$ different clusterings induced by the single-linkage hierarchy.

Since hierarchical structure frequently occurs in real world data, single-linkage
dendrograms have been widely used to analyze real-world data in
fields ranging from computational biology~\cite{yengo2022saturated, gasperini2019genome, letunic2007interactive}, 
image analysis~\cite{ouzounis2012alpha, gotz2018parallel, havel2019efficient}, and astronomy~\cite{baron2019machine, feigelson1998statistical}, among 
many others~\cite{henry2005cluster, yim2015hierarchical, irbook}.
%
Due to its real-world importance, and its importance as a sub-step in other fundamental
clustering algorithms such as HDBSCAN*~\cite{campello2015hierarchical, WangEtAl21}, computing
the SLD of an input weighted tree has been widely studied by parallel
algorithms researchers in recent years, with novel algorithms and implementations being
proposed for the shared-memory setting~\cite{havel2019efficient, WangEtAl21}, GPUs~\cite{nolet2023cuslink, sao2024pandora}, and
distributed memory settings~\cite{hendrix2013scalable, gotz2018parallel}.

In this paper, we are interested in {\em both} theoretically-efficient
and practically-efficient
parallel algorithms for computing the single-linkage dendrogram.
Sequentially, a simple and practical SLD algorithm is to faithfully
simulate the specification by essentially running
Kruskal's minimum spanning tree algorithm using Union-Find.
In this algorithm, which we call \defn{\sequf{}},
the $m$ tree edges are first sorted by weight. 
Then, the algorithm runs in $m$ sequential iterations, 
where the $i$-th iteration takes the $i$-th edge and merges the 
clusters corresponding to the endpoints of the edge.
Maintaining information about the current
cluster for a node in the tree is done using Union-Find.
Overall, the work is $O(n\log n)$ due to sorting the edges, and the algorithm has $\Omega(n)$ depth (longest dependence chain) since the edges are merged sequentially.

Wang et al.~\cite{WangEtAl21} recently gave the first work-efficient
parallel algorithm for the problem---their algorithm
computes the SLD in $O(n \log n)$ expected work 
and $O(\log^2 n \log\log n)$ depth with high probability (\whp{}).
Although their algorithm is work-efficient with
respect to \sequf{}, it is challenging to implement as it
relies on applying divide-and-conquer over the weights, which is 
implemented using the Euler Tour Technique~\cite{JaJa92}. 
Due to its complicated nature, this algorithm does not consistently outperform the simple \sequf{}.
Thus, the authors only released the code for \sequf{} and suggested to always use \sequf{} rather than the theoretically-efficient algorithm.
The algorithm is also randomized due to the use of semisort~\cite{WangEtAl21, GSSB15}, and there is no obvious way to derandomize it to obtain a deterministic parallel algorithm for the problem.
%



As a fundamental problem on trees with significant real-world applicability, 
an important question then is whether there is a {\em relatively simple} parallel 
algorithm for this problem that has good theoretical guarantees, is more 
readily implementable, and can obtain non-trivial speedups over \sequf{}.
In this paper, we give a strong positive answer to this question by giving two 
new algorithms for SLD computation that achieve consistent and large 
speedups over \sequf{}, both of which have good theoretical guarantees.
Both of our algorithms are obtained through a better understanding of
the structure of SLDs, and in particular, by showing how to build SLDs 
in a divide-and-conquer fashion using a {\em merge} primitive that can
merge the SLDs of two trees under certain conditions (\cref{sec:merging_dendrograms}).

We leverage these structural results to design two novel deterministic parallel single-linkage dendrogram algorithms.
The first algorithm is based on parallel tree contraction and stores
spines (node-to-root paths in the dendrogram) in meldable heaps; it uses
heap meld and filter operations to implement dendrogram merging (\cref{sec:tree_contraction})
in the rake and compress steps.
We describe a modification of this algorithm that eliminates the need
for meldable heaps, which makes our algorithm simple to describe and implement (\cref{sec:rctt}).
The second algorithm, which we call \paruf{} is a bottom-up algorithm inspired by the nearest-neighbor chain
algorithm for hierarchical agglomerative clustering that merges all local-minima (edges that are merged before all other neighboring edges by the 
sequential algorithm) in parallel (\cref{sec:paruf}).
We design an asynchronous version of \paruf{}, whose 
practical success shows that there is ample parallelism in many
instances that the \sequf{} algorithm does not exploit.

Our theoretical analysis of both algorithms reveals that the $\Theta(n\log n)$ solution obtained by
existing SLD algorithms is in fact sub-optimal in many cases; both of our algorithms deterministically run in
$O(n\log h)$ work where $h$ is the height of the output dendrogram, where
$\lfloor \log n \rfloor \leq h \leq n-1$.
Thus, our algorithms can require asymptotically less work when the output SLD is not highly skewed. 
For instance, when $h = O(\log n)$ as in the case of a balanced dendrogram, our 
algorithms only use $O(n\log\log n)$ work.
We complement our upper bounds with a simple comparison-based lower bound 
showing that our work bounds are optimal.
Our algorithms are readily implementable and enable us to consistently
compute the dendrogram of a billion-node tree in roughly 10 seconds on a
96-core machine, achieving up to 149x speedup over the highly-optimized
\sequf{} implementation.

The major contributions of this paper are:

\setlength{\itemsep}{0pt}
\begin{itemize}[topsep=1.5pt, partopsep=0pt,leftmargin=*]

  \item A novel merge-based framework for computing the single-linkage
  dendrogram (\cref{sec:merging_dendrograms}), and two instantiations of this framework, the first using parallel tree contraction with meldable heaps (\cref{sec:tree_contraction}), and the second algorithm (called \rctt{}) using the RC-tree tracing technique (\cref{sec:rctt}), which is readily implementable.

  \item \paruf{}, a bottom-up algorithm that is a natural parallelization of the \sequf{} algorithm, and is readily implementable using a fast asynchronous approach (\cref{sec:paruf}).

  \item Analyses of our algorithms showing that the heap-based algorithm (\cref{sec:tree_contraction}) and \paruf{} (\cref{sec:paruf}) deterministically achieve $O(n\log h)$ work, which is optimal for comparison-based algorithms. Our heap-based algorithm has poly-logarithmic depth.


  \item An experimental study of our \paruf{} and \rctt{} algorithms
  showing that they achieve between 2.1--150x
  speedup over \sequf{} on a collection of billion-scale input trees (\cref{sec:experiments}). Our implementation can be found at \url{https://github.com/kishen19/ParSLD}.

\end{itemize}

\hide{
Our algorithms are obtained by exploiting the structure of both the
input tree (unrooted) and the output dendrogram (a rooted tree, where
internal nodes correspond to the edges of the input tree).
We show how the input and output trees are related by designing a
\emp{merge} procedure that takes a $(u,v, w_{uv})$ edge, recursively
computes the dendrograms incident to either endpoint, $D_u, D_v$, and
merges the two dendrograms to obtain the output dendrogram $D$.
The key observation is that the only edges in the SLD that can have
their parent pointer change are those on the \emp{spine} of the $(u,v,
w_{uv})$ edge (see Figure~\ref{todo}).
The idea is similar to the idea of left and right spines in an
existing parallel algorithm for the Cartesian tree problem, which
inspired this work~\cite{BlellochS11}.
However, leveraging this structural fact algorithmically still poses
several non-trivial challenges: how should we structure edges to
merge, and how should the spines be maintained and pruned?

We instantiate our ideas using two quite different algorithms, both of which we prove perform optimal work.
Our first algorithm is an ``activation-based'' algorithm that is
similar in spirit to the nearest-neighbor chain algorithm for
hierarchical agglomerative clustering~\cite{benzecri1982construction}.
We leverage a local checkable condition for determining whether an
edge is {\em ready} to be merged; in each step, the algorithm
identifies and merges all ready edges
We show that this approach, which we call \bottomupdend{} runs in
$O(n\log h)$ work and $O(h \log h)$ depth, where $h \in [\log n, n]$
is the height of the output dendrogram.
We also design an heuristic for overcoming high depth dendrograms,
which we find to be important in some cases in practice.
We provide an optimized implementation of our algorithm using
mergeable heaps, and show that it obtains between X--Yx speedup over
the Kruskal-based algorithm.


\todo[inline]{Kishen: Modify the below result after verifying the binomial heaps based 
idea gives optimal work bounds}

On the theoretical side, our first result improves over Kruskal's
algorithm in terms of work, but can run essentially sequentially in
principle in some adversarial cases.
To understand if we can simultaneously achieve $o(n\log n)$ work and
poly-logarithmic depth, we design a different parallel algorithm that
exploits the structure of the input tree instead of the output tree.
In particular, our algorithm uses a heavy-light decomposition of the
input tree, and a novel merge procedure to efficiently generate
the SLD in parallel.
Our algorithm works by computing partial SLDs at the subtrees rooted
at each node, and maintaining the ``spine'' of each partial SLD, which
preciscely captures all clusters whose parents in the SLD have not yet
been determined.
%
%
Our approach is inspired by viewing the dendrogram as a {\em cartesian
tree} defined over trees instead of sequences.
Inspired by an earlier algorithm for computing the cartesian tree in
parallel~\cite{SB14cartesian}, we present a primitive for {\em
merging} two dendrograms that share a single common vertex and show
how to instantiate this idea using heavy-light decomposition.
Overall, we obtain an algorithm that computes the overall dendrogram
in $O(n\log^{2} h)$ work and $O(\log^2n \log h)$ depth.

To evaluate our ideas, we implemented several sequential and parallel
algorithms for dendrogram construction, and study their scalability
and performance on trees, graphs, and pointsets, with
 billions of nodes on a relatively modest shared-memory machine
equipped with 1TB of RAM.
We implement our algorithms in C++ using a widely-used parallel
toolkit, and will make our implementations publicly-available on
Github.
Talk about implementation / experimental results.

Due to the limit on space, we submit our appendix in the supplementary
material. Our contributions are:

\setlength{\itemsep}{0pt}
\begin{itemize}[topsep=1.5pt, partopsep=0pt,leftmargin=*]
  \item \bottomupdend{}, a simple dendrogram construction algorithm
  that can be practically implemented and improves on the work of the
  state-of-the-art dendrogram solution.

  \item \sldmerge{}, a more sophisticated algorithm based on merging
  dendrogram spines, which runs in $O(n\log^{2} h)$ work and
  $O(\log^2n \log h)$ depth.

  \item A lower-bound in the comparison model showing that our output-sensitive algorithm achieves optimal work.

  \item An experimental study of parallel dendrogram construction. Our
  new implementation offers significant improvements in X...

  \item An end-to-end evaluation of our improved algorithm in
  real-world instances of single-linkage clustering of graphs and
  pointsets.
\end{itemize}
}

%
%
%
%

\section{Preliminaries}\label{sec:prelims}

\myparagraph{Notation}
We denote a graph by $G(V, E)$, where $V$ is the set of vertices and
$E$ is the set of edges in the graph. For weighted graphs, the edges
store real-valued weights. We denote a weight or similarity of an edge
$e = (x, y)$ either by writing $w(x, y)$ or $w(e)$ where $w: E \to \R$ is a weight
function, or by placing weight $w \in \R$ in a tuple $(\{u, v\}, w)$ or $(e, w)$.
%
Given two graphs $G_1(V_1,E_1)$ and $G_2(V_2,E_2)$, $G_1 \cup G_2$ denotes the graph $G(V_1\cup V_2,E_1\cup E_2)$. We also use the notation $G\cup \{e\}$, where $e=(u,v)$ is an edge, to denote the graph $G(V\cup \{u,v\},E\cup\{e\})$.
The number of vertices in a graph is $n = |V|$, and the number of
edges is $m = |E|$. 
$\on{Cut}(X, Y)$ denotes the set of edges between two sets
of vertices $X$ and $Y$.

\myparagraph{Model}
We analyze the theoretical efficiency of our parallel algorithms in
the binary-forking model~\cite{blelloch2019optimal}, a concrete
\emph{work-depth model} used to analyze many recent modern parallel
algorithms.
The model is defined in terms of two complexity measures \defn{work}
and \defn{depth}~\cite{blelloch2019optimal, JaJa92,CLRS}.
The \defn{work} is the total number of operations executed by the
algorithm. The \defn{depth} is the longest chain of sequential
dependencies. We assume that concurrent reads and writes are supported
in $O(1)$ work/depth. A \defn{work-efficient} parallel algorithm is
one with work that asymptotically matches the best-known sequential
time complexity for the problem.  
%




\subsection{Parallel Tree Contraction}
Given a tree $G$, the parallel tree contraction framework by Miller and Reif~\cite{miller1985parallel} contracts $G$ to a single node (or cluster) by repeatedly applying alternate rounds of \rake{} and \compress{}. 

\begin{itemize}[topsep=0pt,itemsep=0pt,parsep=0pt,leftmargin=15pt]
\item[$\blacktriangleright$]{\bf\boldmath$\rake{}(u,v)$:} Given a vertex $v$ of degree $1$ and its (only) neighbor $u$, contract $v$ and merge it into $u$. 

\item[$\blacktriangleright$]{\bf\boldmath$\compress{}(u,v,w)$:} Given a vertex $v$ of degree $2$ and its neighbors $u$ and $w$, contract $v$ and merge it into $u$ (arbitrarily), and make $u$ and $w$ neighbors with $w(u,w) = w(v,w)$. 
\end{itemize}

Parallel tree contraction has been studied since the 1980s~\cite{reif94treecontraction, miller1985parallel,abrahamson1989simple,shun2014sequential,hajiaghayi2022adaptive,acar2017brief,anderson2023parallel}
and can be solved deterministically 
in $O(n)$ work and $O(\log^2 n)$ span in the binary-forking
model by simulating the PRAM algorithm of Gazit et al.~\cite{gazit1988optimal}.
The output of parallel tree contraction assigns each vertex to one of $O(\log n)$ rounds, specifying whether it is raked/compressed, and the edge(s) it is raked/compressed along.
%
%
Another way of viewing the output is as a well-defined hierarchical decomposition (clustering) of trees. Starting with each vertex as a singleton cluster, each rake or compress merges two clusters along an edge; thus the subgraph induced on the vertices in a cluster will always be a connected subtree.
Structurally, it can be represented as a rooted tree known as the rake-compress tree (or RC-tree). 
The RC-tree is essentially a hierarchical clustering defined over the input. 
In RC-trees, we have a node (which we call $\rcnode$) corresponding to each vertex in the input tree. 
If a vertex $v$ is contracted, say via the edge $e=(u,v)$, then the parent of $\rcnode(v)$ will be $\rcnode(u)$, and we also associate the edge $e$ to $\rcnode(v)$.

\begin{figure}[t]
    \centering
    \includegraphics[width=0.375\textwidth]{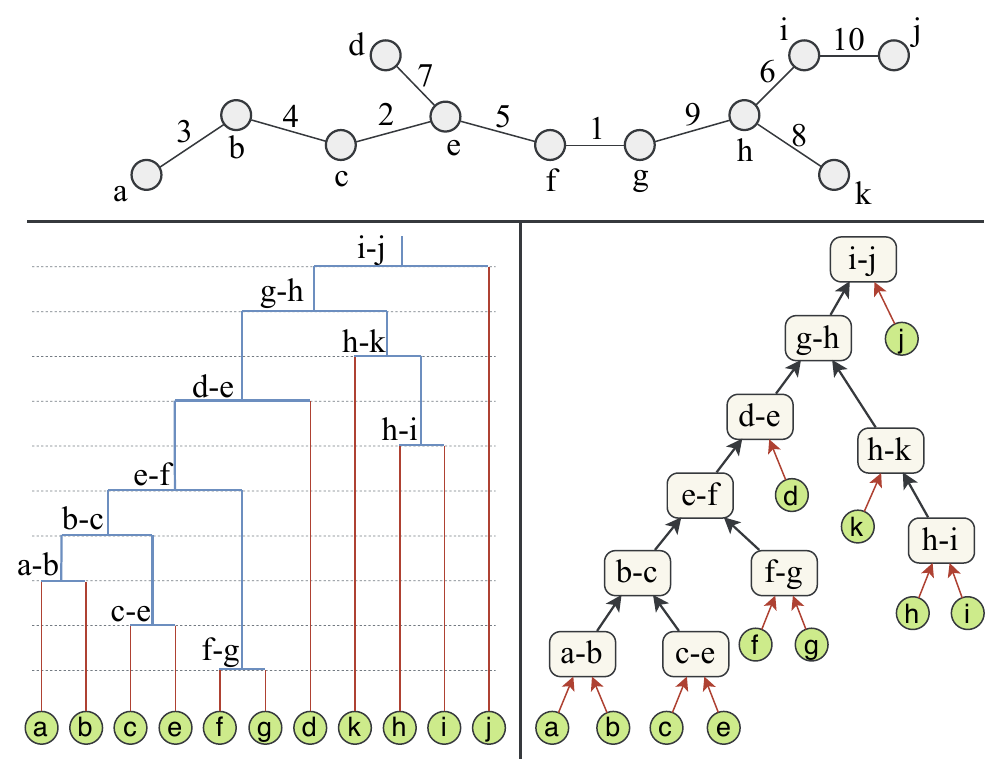}
    \caption{\small Example of single-linkage clustering on the input tree shown in the top panel. The bottom left panel shows a typical visualization of the dendrogram based on the ``height'' of each edge, and the bottom right panel shows the structure of the output SLD.}
    \label{fig:sld_example}
    \Description{Example of single-linkage clustering on the input tree shown in the top panel. The bottom left panel shows a typical visualization of the dendrogram based on the ``height'' of each edge, and the bottom right panel shows the structure of the output SLD.}
    \vspace{-1.5em}
\end{figure}

\subsection{Meldable Heaps}\label{sec:heaps}

We make use of {\em meldable} heaps in this paper. The heaps contain edges keyed
by a comparable edge weight where the comparison function orders edges in sorted
order of rank. Concretely, we make use of binomial heaps~\cite{CLRS}, which support
the {\em meld} operation in logarithmic time. The primitives we use are:
\begin{itemize}[topsep=0pt,itemsep=0pt,parsep=0pt,leftmargin=15pt]
\item $H' \gets \textsc{Insert}(H, e)$: insert $e$ into $H$; return the new heap $H'$.
\item $(H', e) \gets \textsc{DeleteMin}(H)$: remove the minimum element.
\item $H' \gets \textsc{Meld}(H_1, H_2)$: meld the two heaps $H_1$ and $H_2$.
\item $(S, H') \gets \textsc{Filter}(H, e)$: given a heap $H$ and edge $e$, return a pair $(S, H')$ where $S$ contains all edges smaller than $e$ in $H$, and $H'$ contains all elements greater than $e$ in $H$.
\end{itemize}

The \textsc{filter\_and\_insert} primitive used in \cref{alg:rake_optimized} and \cref{alg:compress_optimized} works by first performing \textsc{Insert}, followed by \textsc{Filter}.

\myparagraph{Basic Operations}
Sequential binomial heaps support performing \textsc{Insert}, \textsc{DeleteMin}, and \textsc{Meld} operations in $O(\log n)$ worst case time where $n$ is the number of total elements in the heap(s)~\cite{CLRS}.

\myparagraph{Filter}
We implement \textsc{Filter} on a heap with $s$ elements
in $O(k \log s)$ work and $O(\log^2 s)$ depth where $k$ is the number of elements extracted by the filter operation as follows. 
We independently filter the $O(\log s)$ roots of the binomial trees stored in the binomial heap in parallel.
To filter a binomial tree, we check whether the root is filtered, marking it if so, and recursively proceed in parallel on all children of the root.
This traversal costs $O(k)$ work and runs in $O(\log^2 s)$ depth; the number of nodes filtered in each tree can be computed in the same bounds by treating the set bits as an augmented value.
Emitting the $k$ removed elements into a single array can also be done in the same bounds by using prefix sums.

To rebuild the binomial trees and restore the invariant of a single binomial tree per-rank, we can first emit the children of all nodes removed in the previous step into a single array, where each subtree is stored along with its associated rank.
Rebuilding the heap can then be done by simply performing the same procedure used to build a binomial heap on the trees in this collection.
In a little more detail, the number of subtrees when we remove $k$ nodes is at most $O(k \log s)$.
We can group these subtrees by rank by sorting using a parallel counting sort, which can sort $N$ elements in the range $[0, M]$ in $O(\log n + M)$ depth~\cite{blelloch18notes}.
After sorting the trees into the $O(\log s)$ ranks, rebuilding the trees can be done within each rank in $O(\log s)$ depth using parallel reduce.
We note that the overall structure we maintain is exactly the same as an ordinary binomial heap; we simply augment the heaps with a parallel filter operation that relies on a parallel rebuilding procedure.

\subsection{Single-Linkage Clustering}
Consider an input weighted undirected graph $G(V,E)$. In \emph{single
linkage} clustering, the similarity $\mathcal{W}(X, Y)$ between two clusters $X$
and $Y$ is the minimum similarity between two vertices in $X$ and $Y$,
i.e., 
\begin{equation*}
    \mathcal{W}(X, Y) = \min_{(x,y) \in \on{Cut}(X, Y)} w(x,y).
\end{equation*}

In this paper, we assume the input graph is an edge-weighted tree as
it is well known that single-linkage clustering on weighted (connected\footnote{If the graph is 
not connected, we can solve single-linkage clustering on each connected component independently.}) 
graphs can be reduced to single-linkage clustering on weighted trees; the edges considered for 
merges are exactly that of the minimum spanning tree of the graph~\cite{Gower1969MST}.

Given an input edge-weighted tree $G(V,E)$, let the rank $r_e \in [n]$ of an edge $e$ be 
the position of this edge in the edge sequence sorted by weight (ties broken consistently). We note that our algorithms do not require us to compute the ranks; however, this simplifies the presentation of our algorithms.

The single-linkage dendrogram (SLD) of a tree $G(V,E)$ is a rooted-tree $D(G)$ where each leaf corresponds 
to a vertex in $V$ and each internal node in $D(G)$ corresponds to an edge in $E$. We 
denote the internal node corresponding to edge $e$ as $node(e)$ or simply node $e$. See \Cref{fig:sld_example} 
for an example.

The SLD problem has been widely-studied in recent years~\cite{campello2015hierarchical, WangEtAl21, havel2019efficient, nolet2023cuslink, sao2024pandora}. However, only Wang et al.'s algorithm~\cite{WangEtAl21} is work-efficient with respect to \sequf{}. We discuss related work on SLDs in more detail in \Cref{sec:related_work}.

We assume the output SLD will be stored as a linked list, where each node $e$ points to its parent node $p(e)$. For convenience, we drop the leaf nodes and only consider the tree on the $n-1$ internal edge nodes. For an edge $e\in E$, the $\spine{}_{D(G)}(e)$ denotes the partial linked list starting from node $e$ until the root in $D(G)$. We use $\spine{}(e)$ when the tree and SLD are clear from context. We use the terms SLD and dendrogram interchangeably.

In this paper, we deal with three types of trees: the input tree, the SLD and RC-trees. To keep things clear, we use \emph{vertices} and \emph{edges} when referring to the input, \emph{nodes} and \emph{links} when referring to the SLD, and $\rcnode$ when referring to RC-trees.

\section{Merge-based Algorithms}\label{sec:merge}
In this section, we discuss merge-based algorithms for computing SLDs. 
We first describe a merge subroutine called \sldmerge that allows us to merge the dendrograms of two subtrees, and thus enables various divide-and-conquer algorithms. 
Leveraging the merge subroutine, we give an optimal algorithm for computing SLDs with the help of the parallel tree contraction framework.

\begin{figure}[!t]
    \centering
    \vspace{-1em}
    \includegraphics[width=0.4\textwidth]{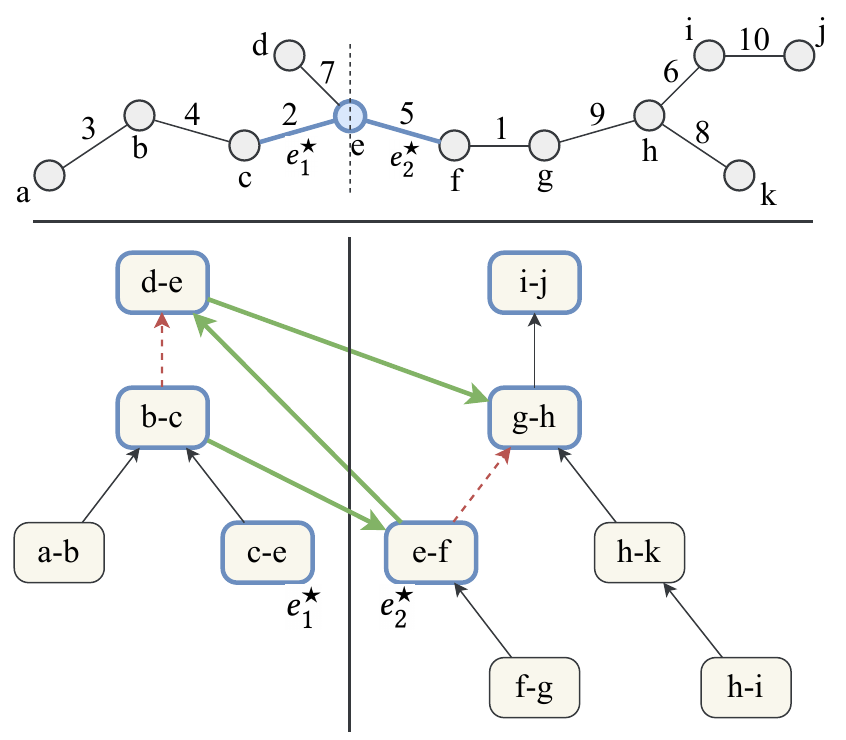}
    \caption{\small An example illustrating \sldmerge{}. The tree is split at node $e$ into two trees (the left
    and right sides of the dashed line) which share no edges, and only share the vertex $e$.
    The \sldmerge{} routine merges the two spines formed by the lowest-rank edge
    incident to $e$ in both trees.
    }
    \label{fig:sld_merge}
    \Description{An example illustrating SLD-Merge. The tree is split at node e into two trees (the left and right sides of the dashed line) which share no edges, and only share the vertex e. The SLD-Merge routine merges the two spines formed by the lowest-rank edge
    incident to e in both trees.}
    \vspace{-1.5em}
\end{figure}

\subsection{Merging Dendrograms}\label{sec:merging_dendrograms}
The key component in our merge-based framework is the subroutine called \sldmerge, which can merge the SLDs of two trees under the following conditions:
\begin{itemize}[topsep=0pt,itemsep=0pt,parsep=0pt,leftmargin=15pt]
    \item the trees share exactly one vertex (denoted by $v$),
    \item the trees share no edges.
\end{itemize}
Given these conditions, observe that the union of the two trees will also be a tree.
More abstractly, from the divide-and-conquer viewpoint merging is algorithmically useful if for example, an input tree is split at vertex $v$ and the SLDs of the two resultant trees are computed recursively, and we wish to merge the two SLDs to compute the SLD of the entire tree; see Figure~\ref{fig:sld_merge} for an example.

More formally, let $G_1(V_1,E_1)$ and $G_2(V_2,E_2)$ denote the two input trees with $V_1\cap V_2 = \{v\}$ and $E_1\cap E_2=\varnothing$, and let $D_1$ and $D_2$ denote their SLDs. We assume that the SLDs are maintained as linked lists, where each node points to its parent node. 
\cref{alg:sld-merge} defines the function $\sldmerge(G_1, G_2, v)$ that, given the two trees and their SLDs, returns the SLD of the merged tree.

\begin{algorithm}
\caption{$\sldmerge(G_1, G_2, v)$}\label{alg:sld-merge}
\DontPrintSemicolon
Let $e_1^\star$ and $e_2^\star$ denote the edges with minimum rank incident to vertex $v$ in $G_1$ and $G_2$, respectively.\;
Merge $D_1$ and $D_2$ along the spines of $e_1^\star$ and $e_2^\star$.\;
\end{algorithm}

Given the linked list representation, the spine of a node $e$ is the partial linked list starting at $e$ and ending at the 
root. The nodes have ranks in increasing order from $e$ to the root. Thus, we can apply the standard list merge algorithm
for merging the two (sorted) lists.
The idea here is inspired by the Cartesian tree algorithm of \cite{SB14cartesian}. 
Indeed, it is not hard to see that the Cartesian tree problem on lists is equivalent to single-linkage clustering on path graphs. 
In \cite{SB14cartesian}, the authors employ an elegant divide-and-conquer approach where they split the input list into two halves, compute the Cartesian tree of each half, and then merge them. 

The key idea when merging the two Cartesian trees was that the merge impacts only nodes on certain spines, while the rest of the nodes are ``protected''. 
Interestingly, we prove that such a property holds even when we merge the dendrograms of trees with arbitrary arity. 
In particular, the merge potentially impacts only nodes on the spines of $e_1^\star$ and $e_2^\star$ (defined above). We will call these min-rank edges $e_1^\star$ and $e_2^\star$ as the \defn{characteristic edges} of the merge, and their spines in their respective SLDs as the \defn{characteristic spines}. Therefore, in other words, the parent of nodes that are \emph{not} on the corresponding characteristic spines remains unchanged. In the case where one of the trees is just a single vertex, we do not have a characteristic edge. Here, we call the empty list as the characteristic spine of this tree.

Before proving the correctness of \sldmerge, we first state some useful structural properties of SLDs.

\begin{figure}[t]
    \centering
    \vspace{-1em}
    \includegraphics[width=0.45\textwidth]{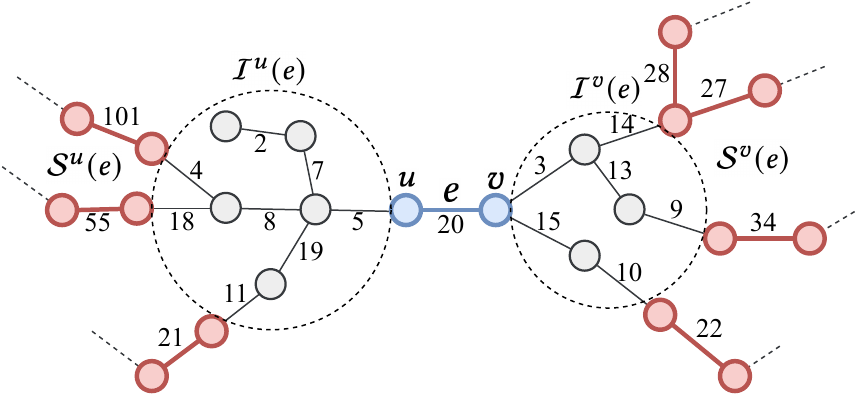}
    \caption{\small Adjacent Superiors and Inferiors (see~\cref{defn:adjsupinf}).}
    \label{fig:adj_sup_inf}
    \Description{A figure illustrating the adjacent superiors and inferiors of an edge.}
    \vspace{-1.5em}
\end{figure}


\begin{definition}[Adjacent Superior and Inferior] \label{defn:adjsupinf}
    Given a tree $G(V,E)$ and an edge $e=(u,v) \in E$, the edge $f \in E$ is an \defn{adjacent superior} of $e$ if $r_e < r_f$ and for every edge $g$ in the unique path between $e$ and $f$, $r_g < r_e$. The edge $f$ is an \defn{adjacent inferior} of $e$ if $r_e > r_f$ and for every edge $g$ in the unique path between $e$ and $f$, $r_g < r_e$. Let $\mathcal{I}^u(e)$ (and $\mathcal{S}^u(e)$) denote the set of adjacent inferior (superior) edges to $e$ that are closer to vertex $u$ compared to vertex $v$. Similarly, we define $\mathcal{I}^v(e)$ and $\mathcal{S}^v(e)$. Let $\mathcal{I}(e) = \mathcal{I}^u(e) \cup \mathcal{I}^v(e)$ and $\mathcal{S}(e) = \mathcal{S}^u(e) \cup \mathcal{S}^v(e)$. See \Cref{fig:adj_sup_inf} for an illustration.
\end{definition}

Observe that the subgraphs induced on the sets of edges $\mathcal{I}^u(e)$ and $\mathcal{I}^v(e)$ will be connected, respectively. 
We have the following lemma about the correspondence of these sets in the output SLD.

\begin{lemma}\label{lem:sld_subtree}
    Let $D$ be the output SLD of the tree $G(V,E)$. For the edge $e=(u,v)\in E$, let $D(e)$ denote the subtree rooted at node $e$, and let $D^u(e)$ denote the subtree rooted at the child of node $e$ that contains the vertex $u$ as a leaf. Similarly, we define $D^v(e)$. Then, $D^u(e) = \mathcal{I}^u(e)$ and $D^v(e) = \mathcal{I}^v(e)$.
\end{lemma}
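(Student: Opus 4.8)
The plan is to reduce both sides of the claimed equality to the same concrete object: the set of edges internal to the cluster containing $u$ at the instant $e$ is merged by the sequential (Kruskal-style) process that defines the SLD. Writing $C_u$ for the set of vertices reachable from $u$ using only edges of rank $< r_e$ (equivalently, the connected component of $u$ in the subgraph $(V,\{f : r_f < r_e\})$), I will show that the internal-node set of $D^u(e)$ and the set $\mathcal{I}^u(e)$ are both exactly $\{f \in E : r_f < r_e,\ \text{both endpoints of } f \text{ lie in } C_u\}$, the edges internal to $C_u$.

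First I would make precise the ``frozen subtree'' invariant of the construction. Processing edges in increasing rank order and maintaining the partial dendrogram, I claim by induction on the number of processed edges that, just before $e=(u,v)$ is processed, the cluster $C_u$ has a partial dendrogram whose internal nodes are exactly the edges internal to $C_u$. Since $C_u$ induces a connected subtree on $|C_u|$ vertices, it contains precisely $|C_u|-1$ internal edges, matching the $|C_u|-1$ internal nodes of its partial dendrogram. When $e$ is processed, $node(e)$ becomes the common parent of the roots of the partial dendrograms of $C_u$ and $C_v$, and every later merge only creates ancestors of $node(e)$, so the structure below $node(e)$ is never modified afterward. Hence the child subtree of $node(e)$ containing the leaf $u$ is exactly the (final) partial dendrogram of $C_u$, i.e.\ $D^u(e)$, and its internal-node set equals the edges internal to $C_u$.

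Second I would identify $\mathcal{I}^u(e)$ with this same set by unfolding \cref{defn:adjsupinf}. For an edge $f$ lying on the $u$-side of $e$, the unique tree path between $e$ and $f$ is precisely the path from $u$ to the nearer endpoint of $f$; the definition then says $f \in \mathcal{I}^u(e)$ iff $r_f < r_e$ and every edge on that path has rank $< r_e$. But this is exactly the statement that the nearer endpoint of $f$ lies in $C_u$, and combined with $r_f < r_e$ it forces both endpoints of $f$ into $C_u$, so $f$ is internal to $C_u$. The converse is symmetric: if $f$ is internal to $C_u$, then its nearer endpoint is reachable from $u$ through rank-$<r_e$ edges, forcing all intermediate path edges to have rank $< r_e$, so $f$ is an adjacent inferior on the $u$-side. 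Combining with the first part gives $D^u(e) = \mathcal{I}^u(e)$, and the argument for $v$ is verbatim.

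The main obstacle I expect is cleanly establishing the frozen-subtree invariant by induction — in particular arguing that a cluster's partial dendrogram is never altered below its root once that root is created, and that its internal nodes are in exact bijection with the edges internal to the cluster. Once that invariant is in hand, the remaining identification of $\mathcal{I}^u(e)$ with the internal edges of $C_u$ is a direct consequence of \cref{defn:adjsupinf} and the uniqueness of paths in a tree, requiring no additional machinery.
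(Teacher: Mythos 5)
Your proposal is correct and follows essentially the same approach as the paper: both arguments track the sequential single-linkage process and identify the child subtree $D^u(e)$ with the cluster of $u$ at the instant $e$ is merged, then identify that cluster's internal edges with $\mathcal{I}^u(e)$ via \cref{defn:adjsupinf}. Your version is simply more explicit, routing both identifications through the component $C_u$ of the rank-$<r_e$ subgraph and spelling out the frozen-subtree invariant that the paper leaves implicit.
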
%
\begin{proof}
Observe that during single linkage clustering the edges in $\mathcal{I}(e)$ are processed before edge $e$. Just before $e$ is processed, the clusters on the endpoints are exactly the sets $\mathcal{I}^u(e)$ and $\mathcal{I}^u(e)$. To see this, observe that after $\mathcal{I}(e)$ is processed, the minimum rank edge incident on clusters of $u$ and $v$ is the edge $e$, since all the other incident edges will be the set $\mathcal{S}(e)$. 
Hence, $D^u(e) = \mathcal{I}^u(e)$ and $D^v(e) = \mathcal{I}^v(e)$.
\end{proof}

We also have the following simple and useful observation.
\begin{lemma}\label{lem:star_spine}
    Let $e_1, e_2, \ldots, e_d$ be the set of edges incident to some vertex $u$, sorted by rank. Then, $e_i \in \spine{}(e_1)$, for all $2 \le i \le d$.
\end{lemma}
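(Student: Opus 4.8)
The plan is to recast the spine-membership claim as a subtree-membership claim in the output SLD and then invoke \cref{lem:sld_subtree}. Recall that $\spine{}(e_1)$ is exactly the set of ancestors of node $e_1$ in $D$ (from $e_1$ up to the root), so $e_i \in \spine{}(e_1)$ holds precisely when node $e_1$ lies in the subtree $D(e_i)$ rooted at node $e_i$. Thus it suffices to prove, for every $2 \le i \le d$, that $e_1 \in D(e_i)$. By \cref{lem:sld_subtree}, the subtree $D(e_i)$ consists of node $e_i$ together with its two child-subtrees, which equal $\mathcal{I}^u(e_i)$ and $\mathcal{I}^{v_i}(e_i)$ (writing $e_i = (u, v_i)$); hence $D(e_i) = \{e_i\} \cup \mathcal{I}(e_i)$, and since $e_1 \neq e_i$ it remains only to show $e_1 \in \mathcal{I}(e_i)$.

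Next I would verify that $e_1$ is an adjacent inferior of $e_i$ directly from \cref{defn:adjsupinf}. Both $e_1$ and $e_i$ are incident to the common vertex $u$, so the unique path in $G$ joining them passes only through $u$ and contains no edge strictly between them. Consequently the defining requirement that every edge $g$ on the path between $e_i$ and $e_1$ satisfies $r_g < r_{e_i}$ is vacuously true. Together with the rank inequality $r_{e_1} < r_{e_i}$, which holds because the incident edges are listed in increasing rank order and $i \ge 2$, this shows $e_1 \in \mathcal{I}^u(e_i) \subseteq \mathcal{I}(e_i)$. Combining with the previous paragraph gives $e_1 \in D(e_i)$, i.e.\ $e_i \in \spine{}(e_1)$, as desired.

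The only delicate point, and the one I would flag explicitly, is the interpretation of the ``unique path between two edges'' in \cref{defn:adjsupinf}: when $e_1$ and $e_i$ share the vertex $u$ this path has no intermediate edges, so the lower-rank condition is satisfied vacuously rather than through any nontrivial comparison. Everything else is a direct unwinding of definitions, so I do not expect a substantive obstacle here.

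As an alternative and more self-contained route that sidesteps \cref{lem:sld_subtree}, I would instead argue directly from the Kruskal-style merging process. Since $e_1$ is the lowest-rank edge at $u$, vertex $u$ remains a singleton until $e_1$ is processed, after which I would maintain the invariant that the top node of the cluster currently containing $u$ always lies on $\spine{}(e_1)$: any merge that grows $u$'s cluster attaches its new merge-edge as the parent of the current top, advancing one step up $\spine{}(e_1)$. Because $G$ is a tree, each $e_i$ is eventually processed as a genuine merge of $u$'s cluster with the cluster of its other endpoint, so $e_i$ becomes such a parent and therefore lands on $\spine{}(e_1)$. This reproves the claim using only the definition of the SLD, and I would include whichever version reads most cleanly in context.
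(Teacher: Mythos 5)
Your proof is correct, and both of your routes formalize exactly the observation the paper relies on — its proof is the single sentence that the claim ``follows due to the fact that these edges share an endpoint,'' which your second (Kruskal-style) argument spells out and your first argument recasts via \cref{lem:sld_subtree} using the vacuous path condition you rightly flag. Either version is a faithful elaboration of the paper's one-line justification, with no gap.
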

The proof follows due to the fact that these edges share an endpoint.
Returning to the merge subroutine, we define a node (in $D_1$ and $D_2$) as \defn{protected under the merge} if its parent doesn't change in the output after the merge. We now prove a crucial lemma. Henceforth, when we say ``nearest edge'', the distance here is the unweighted \emph{hop} distance.
\begin{lemma}\label{lem:spine_only_sld}
    Every node in $D_1 \cup D_2$ that is not present in $\spine{}(e_1^\star) \cup \spine{}(e_2^\star)$ is protected under the merge.
\end{lemma}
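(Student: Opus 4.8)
The plan is to characterize each node's parent pointer purely in terms of \emph{adjacent superiors} (\cref{defn:adjsupinf}) and then argue that this characterization is insensitive to the merge for off-spine nodes. First I would establish the key observation that, in the SLD of any tree, the parent $p(e)$ of a node $e$ is exactly the minimum-rank adjacent superior of $e$: equivalently, if $C_e$ denotes the cluster formed immediately after $e$ is processed, then $p(e)$ is the smallest-rank edge in the cut of $C_e$. Indeed, in a tree the edges crossing the cut of $C_e$ are precisely the adjacent superiors of $e$ (every intermediate edge on the path from $e$ to such an edge lies inside $C_e$ and hence has rank below $r_e$, and conversely a cut edge has its far endpoint separated from $C_e$), and the next edge to merge $C_e$, namely $p(e)$, is the smallest-rank such edge. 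Combined with \cref{lem:sld_subtree}, which identifies the leaves of the subtree $D(e)$ with the vertices of $C_e$, this reduces the statement to a claim about how the sets $\mathcal{S}(e)$ behave under the merge.

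Next I would pin down spine membership in terms of the shared vertex $v$. Since $e_1^\star$ is the minimum-rank edge incident to $v$ in $G_1$, it is the first edge to absorb $v$ into any cluster, so $node(e_1^\star)$ is the parent of the leaf $v$ in $D_1$; hence the ancestors of leaf $v$ are exactly $\spine(e_1^\star)$. Using the subtree characterization this gives the equivalence: for $e \in E_1$, $e \in \spine(e_1^\star)$ if and only if $v$ is a leaf-descendant of $node(e)$, i.e. if and only if $v \in C_e$. Thus an off-spine edge $e \in E_1$ is exactly one whose cluster $C_e$, formed within $G_1$, avoids $v$; by symmetry the same holds on the $G_2$ side with $e_2^\star$.

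With these two reductions in hand, the core step is to show that for $e \in E_1$ with $v \notin C_e$ the adjacent-superior set is identical in $G_1$ and in the merged tree $G = G_1 \cup G_2$, which immediately yields $p_{D}(e) = p_{D_1}(e)$ and hence that $e$ is protected. I would split a candidate superior $f$ by which side it lives on. If $f \in E_1$, the unique $e$--$f$ path cannot leave $G_1$ (leaving and returning would visit the cut vertex $v$ twice on a simple path), so its intermediate edges, and therefore the adjacent-superior relation, are unchanged by the merge. If instead $f \in E_2$, the $e$--$f$ path must pass through $v$, so every edge of the sub-path from $e$ to $v$ is an intermediate edge and hence has rank below $r_e$; but a rank-$(<r_e)$ connected path from an endpoint of $e$ to $v$ forces $v$ into $C_e$ once the edges up to rank $r_e$ have been contracted, contradicting $v \notin C_e$. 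Hence no cross-tree superior exists, $\mathcal{S}_G(e) = \mathcal{S}_{G_1}(e)$, and the parent is unchanged; the case $e \in E_2$ is symmetric.

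The step I expect to be the main obstacle is this cross-tree direction: carefully arguing that a rank-monotone path connecting $e$ to the far tree through $v$ forces $v \in C_e$, thereby ruling out any new superior (and thus any parent change) for off-spine nodes. Getting the path and rank bookkeeping right, in particular that the edge incident to $v$ on the path is an \emph{intermediate} edge and so has rank strictly below $r_e$, is the crux; everything else follows from the two structural reductions above.
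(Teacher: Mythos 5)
Your proof is correct and follows essentially the same route as the paper's: both reduce protection of an off-spine node $e$ to showing that its adjacent superiors and inferiors (and hence, via \cref{lem:sld_subtree}, its parent) are unaffected by the union, and both rule out cross-tree superiors by exploiting that any path from $e$ into the other tree must pass through the shared vertex $v$. The only difference is how that last step is closed: the paper exhibits an explicit high-rank blocking edge on the path to $v$ (the lowest common ancestor $a$ of $e$ and $e_1^\star$ in $D_1$, with $r_a > r_e$), whereas you argue the contrapositive --- an all-low-rank path to $v$ would force $v \in C_e$ and hence $e \in \spine{}(e_1^\star)$ --- which is the same fact seen from the other side.
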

\begin{proof}
Let $e=(u,v) \in E_1$ such that $e \notin \spine{}(e_1^\star)$. Observe that if $\mathcal{I}(e)$ doesn't change when we union the two trees, then the subtree rooted at $e$ in $D_1$ is protected. This follows from the fact that the subtree rooted at $e$ in $D_1$ corresponds to a (connected) subtree in $G_1$ (by \Cref{lem:sld_subtree}), and SLD is computed correctly on this subtree (by induction). Thus, it is enough to show that $\mathcal{I}(e)$ doesn't change for every $e \notin \spine{}(e_1^\star)$.

Let $a$ be the lowest common ancestor of $e$ and $e_1^\star$ in $D_1$. 
Then, observe that $r_e < r_a$ and $a$ lies on the unique path between $e$ and $e_1'$, where $e_1'$ is its nearest edge incident on $v$.
We also know that $e_1' \in \spine{}(e_1^\star)$ by \Cref{lem:star_spine}.
Therefore, $\mathcal{S}(e)$ cannot change, since any possible change would be via $e_1'$, i.e. via vertex $v$, but this is blocked by edge $a$. Hence, $\mathcal{I}(e)$ cannot change as well, completing the proof. A symmetric argument can be made for edges in $E_2$.
\end{proof}

Finally, we now prove the correctness of \sldmerge.
\begin{theorem}
    $\sldmerge(G_1, G_2, v)$ outputs the correct SLD of $G_1 \cup G_2$.
\end{theorem}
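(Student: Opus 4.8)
The plan is to split the verification into two disjoint parts according to \Cref{lem:spine_only_sld}: the nodes off the characteristic spines, and the nodes on them. For the off-spine nodes, \Cref{lem:spine_only_sld} already gives exactly what we need---every node of $D_1 \cup D_2$ that does not lie on $\spine{}(e_1^\star) \cup \spine{}(e_2^\star)$ is protected, so its parent pointer in the output of \sldmerge{} (which copies it verbatim from $D_1$ or $D_2$) agrees with its parent in $D(G_1 \cup G_2)$. Hence the only thing left to establish is that the list merge of the two characteristic spines assigns the correct parents to the spine nodes themselves.

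For this I would first reinterpret the spine semantically. Since $e_1^\star$ is the minimum-rank edge incident to $v$ in $G_1$, it is the first processed edge that enlarges the (initially singleton) cluster containing $v$; consequently $\spine{}_{D_1}(e_1^\star)$ is precisely the rank-ordered sequence of merges that grow the cluster containing $v$ during single-linkage clustering on $G_1$ (and likewise $\spine{}_{D_2}(e_2^\star)$ on $G_2$). I would then track the cluster $C_v$ containing $v$ as the process runs on $G = G_1 \cup G_2$. Because $v$ is the only shared vertex and $E_1 \cap E_2 = \varnothing$, at every stage $C_v = C_v^1 \cup C_v^2$ with $C_v^1 \subseteq V_1$ and $C_v^2 \subseteq V_2$ meeting only at $v$, and the edges eligible to grow $C_v$ are exactly those incident to $C_v^1$ in $G_1$ together with those incident to $C_v^2$ in $G_2$. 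The next merge thus grows whichever side carries the smaller-rank incident edge, so the growth history of $C_v$ in $G$ is exactly the rank-sorted interleaving of the two characteristic spines---which is what the list merge computes, with each spine node's parent set to its successor in the merged order and its off-spine child left untouched.

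The step I expect to be the main obstacle is justifying that a single $G_1$-side merge during the process on $G$ behaves identically to the corresponding merge on $G_1$ in isolation---that is, that $C_v^1$ absorbs the same cluster at the same moment in both processes. The key point is that any cluster $C'$ merged into $C_v$ across a $G_1$-edge lies entirely within $G_1$ and avoids $v$ (otherwise the unique tree path from $C'$ to $v$ would have to re-enter $C_v$ through the cut vertex $v$, a contradiction), so its formation is unaffected by the presence of $G_2$. Making this precise calls for a short induction over the rank-sorted edges, with the invariant that every cluster not containing $v$ formed during the process on $G$ is identical to the cluster formed at the same rank threshold by the process on $G_1$ alone (and symmetrically for $G_2$); \Cref{lem:sld_subtree} supplies exactly this protection, since the subtree rooted at such a node depends only on its adjacent inferiors, which are unchanged by the union. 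Finally I would dispatch the degenerate case in which one tree is the single vertex $v$: there is then no characteristic edge on that side, its characteristic spine is empty, and \sldmerge{} returns the other SLD unchanged, which is trivially correct.
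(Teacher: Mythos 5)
Your proof is correct, but it takes a genuinely different route for the on-spine nodes than the paper does. Both arguments dispose of the off-spine nodes identically via \Cref{lem:spine_only_sld}. For the spine nodes, the paper stays with the static, combinatorial machinery of adjacent superiors: the parent of a spine node $e$ is its minimum-rank adjacent superior, the merge can only introduce new adjacent superiors along the path from $e$ toward $v$, and any such new adjacent superior is shown (by reusing the argument of \Cref{lem:spine_only_sld}) to lie on $\spine{}(e_2^\star)$; since that spine is rank-sorted, the list merge hands $e$ the first element of rank exceeding $r_e$, which is exactly its new parent. You instead argue operationally: you identify $\spine{}(e_1^\star)$ with the rank-ordered growth history of the cluster $C_v$ containing $v$, show via the cut-vertex structure and an induction on rank thresholds (backed by \Cref{lem:sld_subtree}) that the two sides evolve independently, and conclude that the growth history of $C_v$ in $G_1 \cup G_2$ is the rank-sorted interleaving of the two spines. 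Your approach buys a self-contained semantic explanation of \emph{why} the spine is the right object and makes the independence of the off-$v$ sub-clusters explicit (the paper leaves the ``parent equals min-rank adjacent superior'' characterization implicit); the paper's approach is more local, avoids reasoning about the interleaved dynamics, and pinpoints exactly which comparisons the list merge must resolve. You also handle the degenerate single-vertex case explicitly, which the paper's theorem proof does not mention. No gaps.
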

\begin{proof}
From \Cref{lem:spine_only_sld}, we know that nodes not on $\spine{}(e_1^\star) \cup \spine{}(e_2^\star)$ are protected. Consider some edge $e \in \spine{}(e_1^\star)$, and let $e_1'$ denote its nearest edge incident to $v$ in $G_1$. Observe that $e$ doesn't have an adjacent superior on the path from $e$ to $e_1'$. Therefore, after the merge, $e$ might have new adjacent superiors introduced along this path. The parent of $e$ (say $p(e)$) will change iff $r_{p(e)} > r_f$ for some new adjacent superior $f$. We claim that all the new adjacent superiors for $e$ belong to $\spine{}(e_2^\star)$.

To see this, consider some new adjacent superior $f$, and let $e_2'$ denote its nearest edge incident to $v$ in $G_2$. 
Then, for all edges $g$ in the unique path between $e_2'$ and $f$, $r_g < r_e$ (by definition), which implies $r_g < r_f$. Thus, $f$ too doesn't have an adjacent superior on the path from $f$ to $e_2'$. From the proof of \Cref{lem:spine_only_sld}, $f$ is not protected, implying that $f \in \spine{}(e_2^\star)$.

Since $\spine{}(e_2^\star)$ is sorted, if $p(e)$ changes, it will be the first node $f$ in the list with rank greater than $r_e$. Thus, \sldmerge is correct.
\end{proof}

\subsection{Optimal Algorithm via Tree Contraction}\label{sec:tree_contraction}
We now describe an optimal $O(n\log h)$ work and $O(\log^2n \log^2 h)$ depth algorithm for computing SLDs; we will refer to this algorithm as \sldtreecon. We achieve this with the help of the merge subroutine described above and parallel tree contraction. Our bounds match those of the following comparison-based lower bound stated next, which we show in \cref{sec:lb}:

\begin{restatable}{lemma}{lowerbound}\label{thm:lb}
For any $\lfloor \log n \rfloor \leq h \leq n-1$, there is an input that every comparison-based SLD algorithm requires $\Omega(n\log h)$ work to compute the parent of every edge in the output dendrogram.
\end{restatable}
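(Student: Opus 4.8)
The plan is to prove this as a standard information-theoretic (decision-tree) lower bound, reducing from sorting. The key is to build a family of input trees whose output SLDs all have height $\Theta(h)$, yet which collectively encode $\Theta(n/h)$ independent sorting instances each of size $h$. I would then count distinct required outputs and invoke the decision-tree bound.

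The gadget I would use is a \emph{star}: by \Cref{lem:star_spine}, all edges incident to a common vertex lie on a single spine, so the SLD of a star on $h$ vertices is a \emph{path} whose node order along the spine is exactly the sorted-by-rank order of its $h-1$ edges. Hence recovering all parent pointers within a star's SLD is equivalent to sorting its edges, costing $\Omega(h \log h)$ comparisons. Concretely, I would take $k = \lfloor n/h\rfloor$ stars $S_1,\dots,S_k$ with centers $c_1,\dots,c_k$, each star having $h-1$ leaves, and connect the $k$ centers with $k-1$ additional ``inter-star'' edges so the whole object is one tree on $n$ vertices (any leftover vertices attached trivially). Weights are assigned so that (i) each star $S_i$'s $h-1$ intra-star edges occupy a private, disjoint weight interval whose internal order is an arbitrary permutation $\pi_i$, and (ii) all inter-star edges receive the largest weights, arranged (e.g., as a balanced binary tree over the centers with weights increasing toward the root) so that the part of the dendrogram they induce is balanced, of height $O(\log k) = O(\log n)$.

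With this construction the output SLD consists of the $k$ star-paths, each of height $h-1$, hanging below a balanced top part of height $O(\log k)$; since $h \ge \lfloor \log n\rfloor$ and $\log k \le \log n$, the total height is $\Theta(h)$, so $h$ is genuinely the height of the output. For the bound I would count distinct required outputs: fixing the tree shape and the weight intervals, the only freedom is the choice of $(\pi_1,\dots,\pi_k)$, giving $((h-1)!)^k$ inputs, and any two differing in some $\pi_i$ yield star-paths with different node orders, hence differ in at least one parent pointer. A comparison-based algorithm is a ternary decision tree (one branch per $<,=,>$ outcome) that must have at least one leaf per distinct output, so its worst-case depth — and therefore the worst-case number of comparisons, a lower bound on the work — is at least
\[
\log_3\big(((h-1)!)^k\big) \;=\; \Omega\!\big(k\,h\log h\big) \;=\; \Omega(n\log h),
\]
using $\log((h-1)!) = \Theta(h\log h)$ and $k=\lfloor n/h\rfloor$.

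The main obstacle is not the counting step, which is routine once the gadget is fixed, but verifying two structural requirements simultaneously: that the output height is truly $\Theta(h)$ (so the bound is stated in the true output height, matching the $O(n\log h)$ upper bound), and that the $k$ sorting instances are genuinely independent so their costs add. The latter is handled cleanly by the output-counting formulation above, which sidesteps having to argue directly that cross-star comparisons are ``useless''; the former requires checking — again via \Cref{lem:star_spine} — that each inter-star edge incident to a center $c_i$ merely extends that center's spine by $O(1)$ nodes, while the balanced inter-star weights keep the top part at height $O(\log k) = O(h)$. I would also separately note the boundary cases $h=\lfloor\log n\rfloor$ (where $k=n/\log n$ and the bound reads $\Omega(n\log\log n)$) and $h=n-1$ (a single star, recovering the classical $\Omega(n\log n)$ sorting bound).
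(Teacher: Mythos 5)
Your proof takes essentially the same route as the paper's: both reduce from comparison sorting by packing $\lfloor n/h\rfloor$ stars of $h$ vertices each into the input and using the fact that a star's SLD is a path whose nodes appear in sorted rank order, so recovering the parent pointers sorts each star's edges. Your write-up is somewhat more careful than the paper's one-paragraph argument --- you connect the stars into a single tree, verify that the output height is genuinely $\Theta(h)$, and justify the additivity of the $n/h$ sorting costs via an explicit output-counting decision-tree bound rather than asserting it --- but the underlying construction and idea are identical.
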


As indicated previously, with the help of \sldmerge, we can design divide-and-conquer algorithms that partition the input tree into smaller subtrees, compute their respective SLDs in recursive rounds, and finally applies the \sldmerge subroutine, suitably, to obtain the overall SLD. 
A critical task here is to structure these recursive rounds as efficiently as possible, with low depth; parallel tree contraction~\cite{reif94treecontraction} provides one such structure. 

As discussed earlier (in \Cref{sec:prelims}), parallel tree contraction defines a convenient low-depth hierarchical decomposition (or clustering) of trees. Our core idea is to maintain the SLD of each cluster (which is a connected subtree) and apply \sldmerge appropriately during rakes and compresses to obtain the SLD of the merged cluster. 
This way, by the end of tree contraction when we have a single cluster containing the entire input tree, we will have constructed its corresponding SLD, as required. 
First, we will discuss how rakes and compresses can be realized as a couple of \sldmerge operations, assuming merges are performed as standard (linked) list merges. However, this approach leads to sub-optimal work and depth bounds. 
Second, we will discuss how to optimize the merges by additionally maintaining certain spines in a more efficient data structure, and prove optimal work and depth bounds.

\subsubsection{A Sub-optimal Tree-Contraction Algorithm.}\label{sec:subopt}\ 
Formally, for a cluster represented by $u$ during tree contraction, let $G_u$ denote the induced subtree on the vertices in cluster $u$, and let $D_u$ denote the SLD of $G_u$. 
Consider some \rake{} operation given by $\rake(u,v)$, raking vertex $v$ into $u$. 
The subtrees $G_u$ and $G_v$ are connected via the edge $e=(u,v)$. For convenience, let $G_{uv}$ denote the union of $G_u$ and $G_v$, i.e. the cluster obtained after performing the rake. 
We can implement the rake using two steps: (1) Add the edge $e=(u,v)$ to $G_v$ to obtain the subtree $G_v'$, and (2) merge the subtrees $G_u$ and $G_v'$. 
We compute the SLD of $G_{uv}$ in the following two step process:
\begin{algorithm}
\caption{$\rake(u,v)$}\label{alg:rake}
\DontPrintSemicolon
$G_v' \gets G_v \cup \{e\}$, and\newline 
$D_v' \gets \sldmerge(G_v, \{e\}, v)$,\;\label{line:rake_step1}
$G_{uv} \gets G_u \cup G_v'$, and\newline
$D_{uv} \gets \sldmerge(G_u, G_v', u)$.\;\label{line:rake_step2}
\end{algorithm}
\begin{figure}[!htpb]
    \centering
    \includegraphics[width=0.35\textwidth]{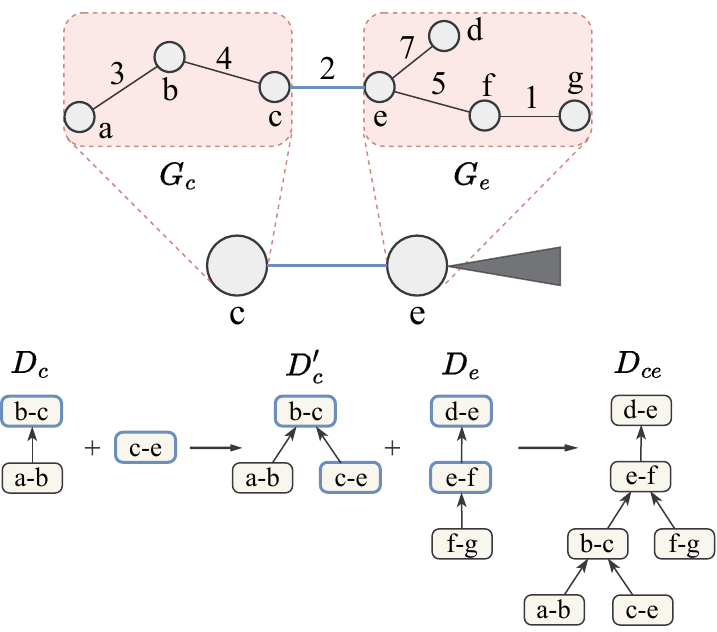}
    \caption{\small An example illustrating the two-step rake (see \Cref{alg:rake}). Here, we perform $\rake(e,c)$, which rakes $c$ into $e$.}\label{fig:rake-compress}
    \Description{An example illustrating the two-step rake (see Algorithm rake(u,v)). Here, we perform rake(e,c), which rakes c into e.}
    \vspace{-1.25em}
\end{figure}

Compress can be performed in an identical fashion: given an operation $\compress(u,v,w)$, we can choose to merge $v$ with $u$ (arbitrarily) and a similar viewpoint, as in rake, can be applied for this merge. Therefore, each rake and compress operation (identically) performs two \sldmerge operations. 

During tree contraction, if we execute the rake/compress operations in parallel, we could encounter race conditions. 
For instance, multiple clusters might get raked or compressed into the same cluster, and we make no assumptions about the structure of the input trees (e.g., some applications of tree contraction assume bounded-degree input trees). 
Note that these are the only race conditions that occur in any rake/compress round of tree contraction.

We can handle parallel rake or compress operations as follows: let $v_1, v_2,\ldots, v_d$ be vertices that are being raked (or compressed) into the same cluster $u$. 
Observe that the step~(1) described above doesn't affect $G_u$ or $D_u$. 
Thus, this step can be run safely in parallel. Let $G_u''$ denote the tree formed by taking the union of trees $G_{v_1}',G_{v_2}', \ldots, G_{v_d}'$. Our idea is to first compute the dendrogram of $G_u''$, and then finally run $\sldmerge(G_u,G_u'',u)$.
We can compute the dendrogram of $G_u''$ by simply merging the dendrograms of $G_{v_1}',G_{v_2}', \ldots, G_{v_d}'$ together, i.e. merging the $d$ sorted lists given be $\spine{}((u,v_i))$ for each $i \in [d]$. This is correct due to a simple extension of \Cref{lem:star_spine}: since all these edges incident to $u$ will be on the same spine, their respective spines in $G_{v_i}'$ will also end up on the same spine. This can be computed quickly by running a parallel reduce operation with \sldmerge as the reduce function, resulting in depth $O(\log d)$ ($= O(\log h)$ by \Cref{lem:star_spine}) times the depth of \sldmerge for all the rakes (and compresses) on $u$. 

The naive (sequential) linked list based implementation of \sldmerge has $O(h)$ work and depth. Thus, if we charge each merge cost of $O(h)$ to the vertex that is being raked/compressed, we get an overall work bound of $O(nh)$ for this sub-optimal version of \sldtreecon. As discussed, each rake/compress round will have a worst case depth of $O(h\log h)$. Since the number of such rounds is $O(\log n)$, the overall depth will be $O(h\log h\log n)$. We will now prove its correctness.

\begin{lemma}\label{lem:subopt_sldtreecon}
    The sub-optimal version of \sldtreecon correctly computes the SLD.
\end{lemma}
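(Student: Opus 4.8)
The plan is to prove the invariant that, at the start of every round of tree contraction, every live cluster $u$ stores in $D_u$ a correct SLD of its induced subtree $G_u$. Since tree contraction terminates with a single cluster equal to the whole input tree, establishing this invariant immediately yields the lemma. I would proceed by induction on the round number. The base case is the initial configuration, where each cluster is a single vertex whose induced subtree has no edges; the corresponding SLD is empty, so the invariant holds vacuously. Throughout, I rely on the fact recalled in \Cref{sec:prelims} that every tree-contraction cluster induces a connected subtree and that distinct clusters are vertex- and edge-disjoint except where a contraction edge joins them.

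For the inductive step, I would first treat a single isolated \rake{} $\rake(u,v)$ and verify that the two-step implementation in \Cref{alg:rake} is a valid composition of \sldmerge calls. For Line~\ref{line:rake_step1}, the trees $G_v$ and the single-edge tree $\{e\}$ (with $e=(u,v)$) share exactly the vertex $v$ and no edges, since $e\notin G_v$ and $u$ lies in a different cluster; hence the correctness theorem for \sldmerge applies and $D_v'$ is the SLD of $G_v'=G_v\cup\{e\}$. For Line~\ref{line:rake_step2}, the trees $G_u$ and $G_v'$ share exactly the vertex $u$ (the sole endpoint of $e$ lying in cluster $u$) and no edges, since cluster subtrees are edge-disjoint and $e\notin G_u$; a second application of the theorem shows $D_{uv}$ is the SLD of $G_{uv}$. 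The compress case is handled identically, merging cluster $v$ into cluster $u$ along the edge $(u,v)$ as noted in the text.

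The main work, and the step I expect to be the principal obstacle, is the parallel case in which several clusters $v_1,\dots,v_d$ are simultaneously raked or compressed into the same cluster $u$. Here I would first observe that step~(1) is race-free because it only modifies each $G_{v_i}$ locally and never touches $G_u$ or $D_u$; hence each $D_{v_i}'$ is correct by the single-operation argument above. It then remains to justify that merging the $d$ dendrograms $D_{v_1}',\dots,D_{v_d}'$ by a parallel reduce with \sldmerge as the combining function produces the SLD of their union $G_u''$. Each $G_{v_i}'$ has a unique edge $(u,v_i)$ incident to $u$, which is therefore its characteristic edge for vertex $u$, with characteristic spine $\spine{}((u,v_i))$. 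The subtle point is verifying the ``share exactly one vertex, share no edges'' precondition of \sldmerge at every node of the reduction: since the $G_{v_i}'$ pairwise meet only at $u$ and are otherwise edge-disjoint, the union of any subset of them shares with the union of any disjoint subset exactly the vertex $u$ and no edges. Consequently each pairwise merge in the reduce is valid regardless of the association order, and applying the \sldmerge theorem inductively along the reduction shows every partial result is a correct SLD. The extension of \Cref{lem:star_spine} — that the edges $(u,v_i)$ all lie on a single spine — guarantees that the characteristic spines being combined are nested on a common spine, so the $d$-way list merge is well-defined and order-independent.

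Finally, a single call $\sldmerge(G_u, G_u'', u)$ combines the SLD of the original cluster with the SLD of the newly attached fan; its two trees share exactly $u$ and no edges, so the \sldmerge theorem yields the SLD of the merged cluster and restores the invariant. Chaining the invariant across all $O(\log n)$ contraction rounds then completes the proof, with the connectedness of clusters and the edge-disjointness of distinct clusters being precisely what is needed to discharge the \sldmerge preconditions at each merge.
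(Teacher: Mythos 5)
Your proof is correct and follows essentially the same route as the paper's: check that each of the two steps of \rake{} (and \compress{}) satisfies the ``share exactly one vertex, no shared edges'' precondition of \sldmerge, then induct over the sequence of rake/compress operations. The extra care you take with the concurrent case (several clusters raked into one, combined by a parallel reduce) is material the paper places in the prose immediately before the lemma rather than in the proof itself, so you are not diverging, just being more explicit.
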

\begin{proof}
    Observe that both step~\ref{line:rake_step1} and step~\ref{line:rake_step2} are merges between subtrees that satisfy the requirement mentioned in \Cref{sec:merging_dendrograms}, i.e. the subtrees share exactly one vertex and no edges. Thus, rake correctly computes the dendrogram of the merged cluster; a similar argument works for compress. Since tree contraction is just a sequence of rake and compress operations, the correctness follows by a simple inductive argument.
\end{proof}

\subsubsection{Optimizing the merge step.}\ 
We now describe how to optimize the merge step. From \Cref{sec:merging_dendrograms}, we know that merging affects only nodes on the characteristic spines associated to that merge. Our main idea is that, in addition to the linked list representation for storing the output of the dendrogram, for each cluster we (try to) maintain the characteristic spines, corresponding to the next (future) merge involving that cluster, in \emph{parallel binomial heaps} and perform merges via these heaps. We chose binomial heaps since (to the best of our knowledge) it is the only data structure that supports fast merge (or meld) and can support low-depth parallel filter operations. We describe the heap interface and the cost bounds in \Cref{sec:heaps}.
As we will see, if we perform only rakes, it is easy to always store these characteristic spines. 
However, compress operations pose a significant challenge towards exactly storing the characteristic spines. 
Nevertheless, if compresses are performed carefully (in terms of which cluster to merge with), we show that the spine consequently stored at each cluster is always sufficient for every merge operation performed in the future. 
Henceforth, when we mention heaps, we refer to parallel binomial heaps.

Extending the notation from before, let $H_u$ denote the min-heap associated with the cluster represented by $u$.
We first extend the notion of protected nodes defined in \Cref{sec:merging_dendrograms} as follows: a node $e$ is \defn{protected} if its parent node is identical to its parent in the final output. We would like to maintain the following invariant: (1) nodes present in the heap are potentially \emph{not} protected and correspond to some spine in the dendrogram associated to that cluster, and, (2) all nodes in that cluster \emph{not} present in the heap are protected. 
We also show that when a node is deleted from its heap during the course of the algorithm, it is definitely protected. 
Thus, we ensure that we update the output (i.e. parent array) only when a node is deleted from its heap. 
This invariant helps us carefully charge the associated merge costs to these nodes.

We will now describe optimized versions of the previously described two-step rake and compress operations, in which \sldmerge will be implemented in a white-box manner via the heaps.

\begin{figure*}[ht]
\centering
\renewcommand{\arraystretch}{2}
\vspace{-1em}
\begin{tcolorbox}[colback=white, enhanced, fontupper=\small, fontlower=\tiny, width=0.7\linewidth, tabularx={>{\centering\arraybackslash}l|>{\centering\arraybackslash}c|>{\centering\arraybackslash}X|>{\centering\arraybackslash}X}]
 & & Heaps & Output\\\hline
\textbf{Init:} & \makecell{\includegraphics[width=0.4\linewidth]{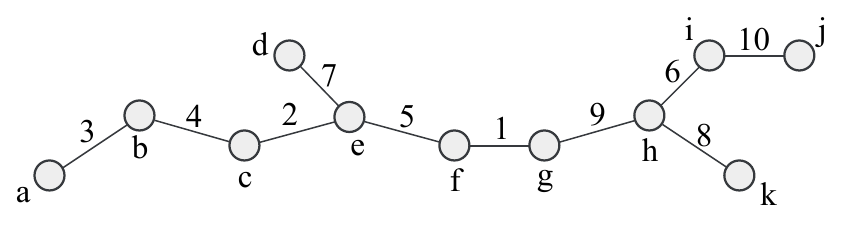}} & & \\\hline
\makecell[l]{\textbf{Round 1:}\\$\rake(a,b)$\\$\rake(d,e)$\\$\rake(h,k)$\\$\rake(i,j)$} & \makecell{\includegraphics[width=0.4\linewidth]{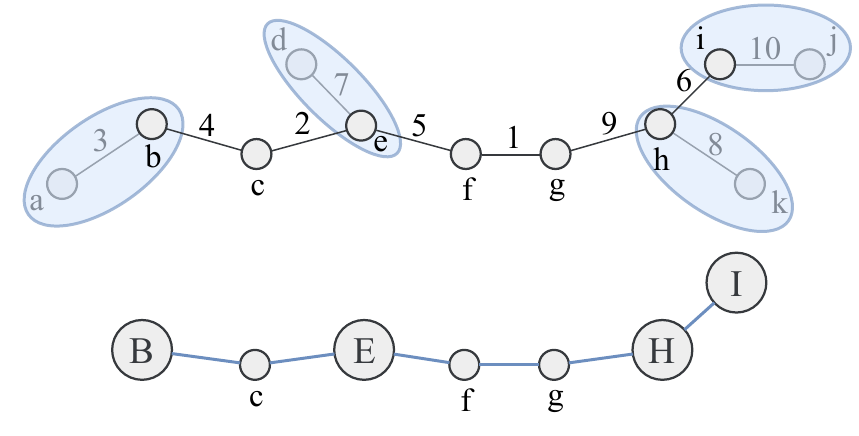}} & \makecell{\includegraphics[width=\linewidth]{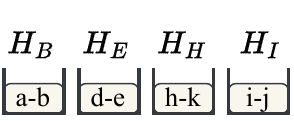}} & \makecell{\includegraphics[width=\linewidth]{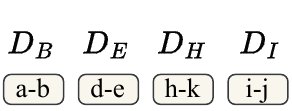}}\\\hline
\makecell[l]{\textbf{Round 2:}\\$\compress(B,c,E)$\\$\compress(E,f,g)$\\$\compress(g,H,I)$} & \makecell{\includegraphics[width=0.4\linewidth]{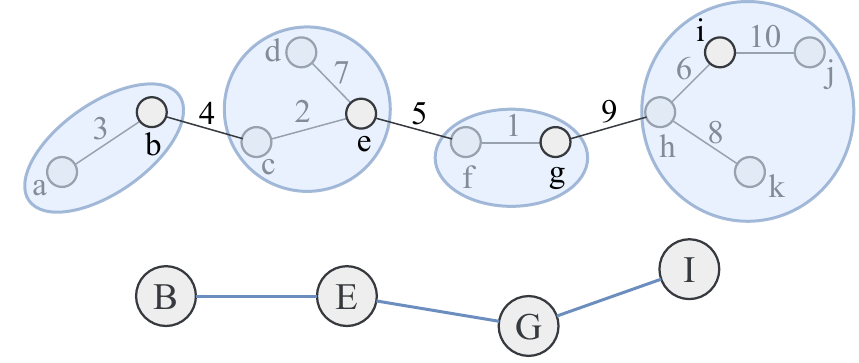}} & \makecell{\includegraphics[width=\linewidth]{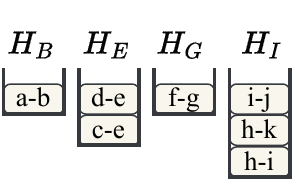}} & \makecell{\includegraphics[width=\linewidth]{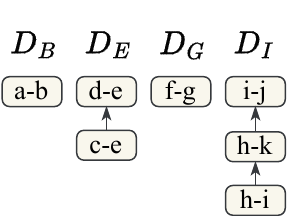}}\\\hline
\makecell[l]{\textbf{Round 3:}\\$\rake(B,E)$\\$\rake(G,I)$} & \makecell{\includegraphics[width=0.4\linewidth]{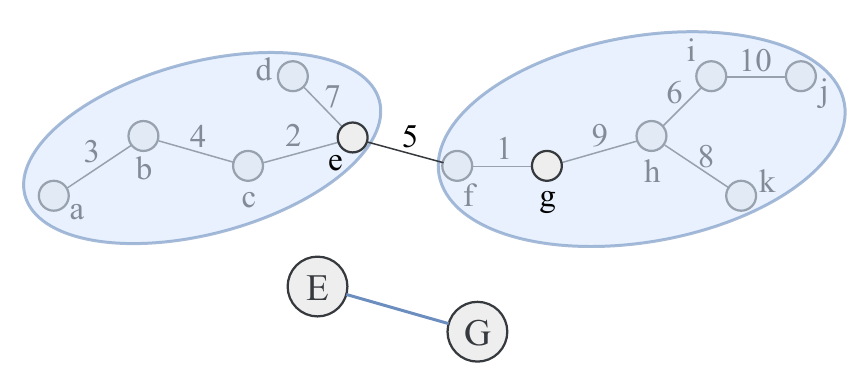}} & \makecell{\includegraphics[width=0.7\linewidth]{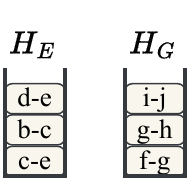}} & \makecell{\includegraphics[width=\linewidth]{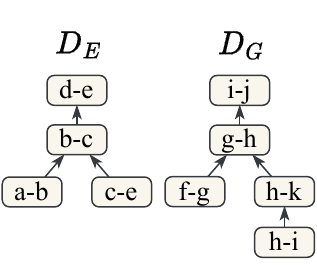}}\\\hline
\makecell[l]{\textbf{Round 4:}\\$\rake(E,G)$} & \makecell{\includegraphics[width=0.4\linewidth]{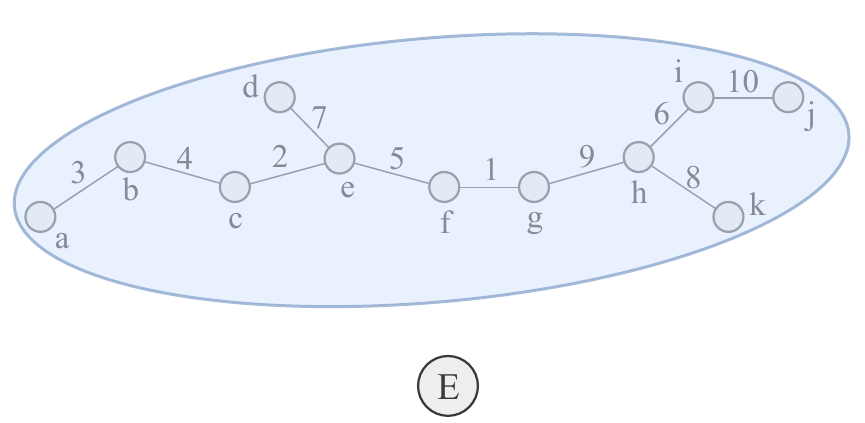}} & \makecell{\includegraphics[width=0.27\linewidth]{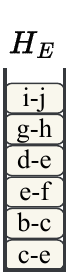}} & \makecell{\includegraphics[width=\linewidth]{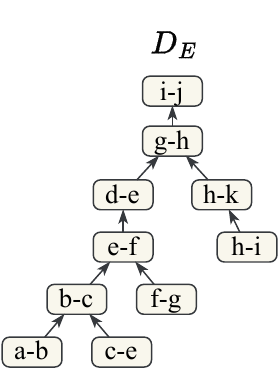}}
\end{tcolorbox}
\caption{\small A full example of \sldtreecon: the first column represents the rakes/compresses performed in that round; the second column displays the clustering obtained by tree contraction (as well as a compact representation); the third column displays the (non-empty) heaps maintained at each cluster; and the fourth column represents the (non-empty) SLDs of each cluster.}
\vspace{-2.25em}
\label{fig:full_example}
\Description{A full example of SLD-TreeContraction: the first column represents the rakes/compresses performed in that round; the second column displays the clustering obtained by tree contraction (as well as a compact representation); the third column displays the (non-empty) heaps maintained at each cluster; and the fourth column represents the (non-empty) SLDs of each cluster.}
\end{figure*}

\myparagraph{\boldmath$\rake(u,v)$} 
Rake is implemented as follows:

\begin{algorithm}
\caption{$\rake(u,v)$ (optimized version)}\label{alg:rake_optimized}
\SetKwFor{parfor}{parfor}{do}{}%
\AlgoDisplayBlockMarkers\SetAlgoBlockMarkers{}{end}%
\DontPrintSemicolon
Let $e = (u,v)$.\;
$S, H_v' = H_v.\textsc{filter\_and\_insert}(e)$.\; \label{rakeopt:filter}
$H_{u} \gets \textsc{Meld}(H_u, H_v')$.\;
// Update output; $S$ contains edges $f \in H_v$ that had $r_f < r_e$ and were filtered out on \cref{rakeopt:filter}.\;
\If{$S.\mb{size} > 0$}{
    Sort $S$ according to rank.\;
    \parfor{each $i=1$ to $S.\mb{size}-1$}{
        $p(S[i]) = S[i+1]$\;
    }
    $p(S[S.\mb{size}]) = e$.\;
}
\end{algorithm}
Note that vertex $u$ will be the representative of the merged cluster, hence the new spine is stored in $H_u$.
Now, we prove the following claim about the set $S$.
\begin{claim}\label{claim:protect_rake}
    The nodes in the set $S$ computed during rake are protected after the merge.
\end{claim}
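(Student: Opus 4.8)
The plan is to show that every $f\in S$ is written with its \emph{final} SLD parent during this rake, which is precisely what \textbf{protected} means in this section. Since the rake faithfully realizes \sldmerge\ (so the parents it writes are correct for the SLD of the \emph{current} merged cluster, by the correctness established in \Cref{lem:subopt_sldtreecon}), it suffices to argue that none of these parents can ever change again. Because the parent of an edge $f$ in any SLD is the minimum-rank edge of $\mathcal{S}(f)$, this reduces to showing that $\mathcal{S}(f)$ is frozen the moment the rake completes. My strategy for that is a blocking argument in the style of \Cref{lem:spine_only_sld}, using the rake edge $e=(u,v)$ itself as the blocker.

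First I would pin down the key structural fact: $e$ is an adjacent superior of every $f\in S$ on the side facing $v$. By the heap invariant, $H_v$ stores the characteristic spine $\spine{}(e_v^\star)$, where $e_v^\star$ is the minimum-rank edge incident to $v$ in $G_v$; hence $S$ consists exactly of the spine nodes of rank below $r_e$, and each such $f$ is an ancestor of $e_v^\star$ in $D_v$, i.e. $e_v^\star\in D(f)$. By \Cref{lem:sld_subtree}, $D(f)\setminus\{f\}=\mathcal{I}(f)$, and the subgraph induced by $\mathcal{I}(f)$ is the connected region of $G_v$ reachable from $f$ using only edges of rank $<r_f$. Since $e_v^\star\in\mathcal{I}(f)$ is incident to $v$, the vertex $v$ lies in this region, so the unique $f$-to-$v$ path in $G_v$ uses only edges of rank $<r_f$. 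As $e$ is incident to $v$ and $r_f<r_e$, this exhibits $e$ as an adjacent superior of $f$ toward $v$.

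Next I would run the blocking argument. After the rake, $v$ becomes interior to the merged cluster, so every later merge touching this cluster happens at a boundary vertex reachable only through $u$, and therefore only across $e$. Thus any edge $h$ introduced by a future merge lies beyond $e$ on the unique path from $f$, so that path contains $e$ with $r_e>r_f$; by the definition of adjacent superior, $h$ cannot be one for $f$. Since future merges add edges only on the $u$-side, $\mathcal{S}(f)$ acquires nothing new, so it equals its final value and $p(f)$ is frozen. It remains only to observe that the assignments of \Cref{alg:rake_optimized} match this final value: the internal chain $S[1]\to\cdots\to S[|S|]$ is inherited unchanged from the consecutive spine order in $D_v$, while for the top element the previous parent was the first spine node of rank above $r_e$, which $e$ now strictly dominates in rank, so its parent correctly becomes $e$.

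The hard part will be the structural fact in the second paragraph---translating ``$f$ is on the spine stored in $H_v$'' into the hop-distance statement that the $f$-to-$v$ path is entirely below $r_f$. The cleanest route is through the connectivity of $\mathcal{I}(f)$ given by \Cref{lem:sld_subtree}, which avoids an awkward case split on whether $f$ lies on the $v$-side or the far side of $e_v^\star$. A secondary point to be careful about is that this whole argument presumes the heap invariant that $H_v$ really stores the characteristic spine; for the rake-only case this is immediate, but in the presence of compresses I would invoke it as the maintained invariant and defer its justification to the compress analysis.
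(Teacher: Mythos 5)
Your proof is correct and follows essentially the same route as the paper's: the rake edge $e$, having rank above every $f\in S$, lies on the unique path from $f$ to any edge involved in a future merge and therefore blocks any change to $\mathcal{S}(f)$, freezing $p(f)$. You additionally spell out (via the heap invariant and \Cref{lem:sld_subtree}) why $e$ is genuinely an adjacent superior of each $f\in S$ and why the written parent chain matches the final SLD---details the paper's proof asserts more tersely---but the underlying argument is the same.
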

\begin{proof}
    To see this, observe that for each $f\in S$, $r_f < r_e$. Thus, $e$ is either an adjacent superior to $f$, or $f$ has an adjacent superior on the unique path between $f$ and $e$. Since cluster $v$ is being raked, for any future merge corresponding to some edge $g$, the unique path between $f$ and $g$ will contain $e$. Thus, $e$ protects $f$ from all future merges. Further, since the nodes in $S$ are on the same spine, they will be present in sorted order, and $e$ will be the parent of the max-rank edge in $S$. 
\end{proof}

\myparagraph{\boldmath$\compress(u,v,w)$} Compress is executed in a similar manner as rake, except for one difference: the cluster $v$ will always merge with the cluster along the {\em lesser rank edge} (previously we could merge with an arbitrary neighboring cluster). The pseudocode is shown in \cref{alg:compress_optimized}.
Next, we prove a similar claim, as in rake, about the set $S$.

\begin{algorithm}
\caption{$\compress(u,v,w)$ (optimized version)}\label{alg:compress_optimized}
\SetKwFor{parfor}{parfor}{do}{}%
\AlgoDisplayBlockMarkers\SetAlgoBlockMarkers{}{end}%
\DontPrintSemicolon
Let $e_1=(u,v)$ and $e_2=(v,w)$.\;
\If {$r_{e_1} > r_{e_2}$}{
    swap $u$ and $w$ // Thus, we will have $r_{e_1} < r_{e_2}$.\;
}
$S, H_v' = H_v.\textsc{filter\_and\_insert}(e_1)$.\; \label{compressopt:filter}
$H_{u} \gets \textsc{Meld}(H_u, H_v')$.\;
// Update output; $S$ contains edges $f \in H_v$ that had $r_f < r_{e_1}$ and were filtered out on \cref{compressopt:filter}.\;
\If{$S.\mb{size} > 0$}{
    Sort $S$ according to ranks.\;
    \parfor{each $i=1$ to $S.\mb{size}-1$}{
        $p(S[i]) = S[i+1]$\;
    }
    $p(S[S.\mb{size}]) = e_1$.\;
}
\end{algorithm}

\begin{claim}\label{claim:protect_compress}
    The nodes in the set $S$ computed during compress are protected after the merge.
\end{claim}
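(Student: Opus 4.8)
The plan is to mirror the proof of \Cref{claim:protect_rake}, while carefully accounting for the one structural difference of compress: the cluster $v$ now has \emph{two} boundary edges, $e_1$ and $e_2$, rather than the single edge present in a rake. The setup to exploit is the swap performed by \Cref{alg:compress_optimized}, which guarantees $r_{e_1} < r_{e_2}$, together with the fact that the filter is applied using $e_1$. Consequently every $f \in S$ satisfies $r_f < r_{e_1} < r_{e_2}$, so \emph{both} boundary edges of $G_v$ have rank strictly larger than $r_f$. This is the only fact beyond the rake argument that the proof needs, and it is exactly what the swap is designed to provide.

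First I would record that, because $v$ is a degree-$2$ vertex when it is compressed, the subtree $G_v$ is attached to the rest of the tree through precisely the two edges $e_1$ and $e_2$; hence any path leaving $G_v$ must traverse one of them. I would then argue that $f$ can acquire no new adjacent superior from any future merge. Let $g$ be any edge a future merge could introduce as a candidate superior for $f$; since $g$ lies outside $G_v$, the unique path from $f$ to $g$ crosses the boundary of $G_v$ and therefore contains $e_1$ or $e_2$. As $r_{e_1}, r_{e_2} > r_f$, this path contains an edge of rank greater than $r_f$, which by \Cref{defn:adjsupinf} disqualifies $g$ from being an adjacent superior of $f$. Thus $\mathcal{S}(f)$ is finalized. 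A symmetric observation handles the subtree below $f$: every edge of $\mathcal{I}(f)$ is reached from $f$ along a path all of whose edges have rank $< r_f < r_{e_1}, r_{e_2}$, so such a path cannot use $e_1$ or $e_2$ and $\mathcal{I}(f)$ lies entirely inside $G_v$, already determined by induction. By \Cref{lem:sld_subtree} the entire subtree rooted at $f$ is therefore protected.

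It remains to check that the parent pointers the algorithm assigns to $S$ are the correct final ones. By the heap invariant the elements of $S$ lie on a single characteristic spine, so they form an ancestor chain and appear in sorted rank order; hence pointing each element to the next-larger element of $S$, and pointing the maximum-rank element of $S$ to $e_1$, reproduces exactly the spine within $G_v$ capped by its now-smallest adjacent superior $e_1$. Combined with the previous paragraph, the assigned parents never change under later merges, so the nodes of $S$ are protected.

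The main obstacle, and the reason compress is more delicate than rake, is precisely this second boundary edge $e_2$. Had the algorithm filtered with the larger-rank edge, then nodes with rank in the interval $(r_{e_1}, r_{e_2})$ would be extracted into $S$ while still exposed to a new adjacent superior reachable through the lower-rank edge $e_1$ (which would \emph{not} block them), and these nodes would not be protected. The key step of the proof is therefore to invoke the swap guaranteeing $r_{e_1} < r_{e_2}$, so that both exits from $G_v$ out-rank every element of $S$ and block every future merge; everything else follows the rake argument verbatim.
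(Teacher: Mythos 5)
Your boundary argument proves less than the claim needs, and what it misses is precisely the hard part. You argue that any candidate new adjacent superior $g$ of $f \in S$ lies outside $G_v$, so the path from $f$ to $g$ crosses the boundary and therefore contains $e_1$ or $e_2$, both of which out-rank $f$. But this only disqualifies edges \emph{strictly beyond} the boundary. When $g$ is $e_1$ or $e_2$ itself, the path between $f$ and $g$ (which, per \cref{defn:adjsupinf}, excludes its endpoints) lies entirely inside $G_v$ and contains neither boundary edge, so nothing blocks it. In particular $e_2$ really can become a new adjacent superior of a node $f \in S$: if $f$ is incident to the endpoint of $e_2$ inside $G_v$, the path between them is empty and $e_2 \in \mathcal{S}(f)$ automatically. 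You account for $e_1$ by capping the chain of $S$ with it, but you never rule out the scenario in which $e_2$ is the \emph{minimum-rank} adjacent superior of some $f \in S$. If the nearest superior $e_1'$ on the path from $f$ toward $e_1$ satisfies $r_{e_1'} > r_{e_2}$ while $e_2$ is an adjacent superior of $f$, then $f$'s true parent is determined by $e_2$, not by the $e_1$-side chain the algorithm writes, and $f$ would not be protected. Your closing remark shows you see that an unblocked boundary edge is dangerous when filtering by the wrong edge; the same danger persists for $e_2$ even after the swap, and it is not addressed.

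The paper's proof is a case analysis whose only substantive case is exactly this one ($e_2$ is an adjacent superior of $f$, i.e.\ $e_2' = e_2$). It shows that $r_{e_1'} > r_{e_2}$ is impossible given that tree contraction reached the current configuration: the merge that processes $e_1'$ must itself be a compress whose partner edge $e_1''$ has larger rank and must lie between $e_1$ and $e_1'$, and that merge must precede the current compress; iterating this argument forces either $r_{e_1'} < r_{e_2}$ or $r_{e_1'} < r_{e_1}$, a contradiction. Your proof contains no substitute for this scheduling argument, so it has a genuine gap; the remaining cases (superiors strictly outside $G_v$, and the treatment of $\mathcal{I}(f)$) you handle correctly and consistently with the paper.
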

\begin{proof}
    Consider some edge $f \in S$. We have $r_f < r_{e_1} < r_{e_2}$. Let $e_1'$ and $e_2'$ denote the adjacent superiors on the paths from $f$ to $e_1$ and $e_2$, respectively. Consider some future merge involving an edge $g$.

(1) If $e_1' \ne e_1$ and $e_2' \ne e_2$, then $e_1'$ and $e_2'$ protect $f$ from any future merges. (Interestingly, in this case, $f$ would have already been protected by a previous rake/compress operation involving either $e_1'$ or $e_2'$, but this is not important for this proof.)

(2) If $e_2' \ne e_2$, then $f$ is protected by $e_2'$ in the case when $g$ is closer to $e_2$ than $e_1$. Similarly, $f$ is protected by $e_1$ in the case $g$ is closer to $e_1$ than $e_2$.

        (3) Finally, we have the case $e_2' = e_2$. If $g$ is closer to $e_1$, $e_1'$ protects $f$. Consider the case when $g$ is closer to $e_2$ than $e_1$. Observe that $g$ cannot have an endpoint in the cluster containing $f$ until $g=e_2$. After merging along $e_2$, any future merge corresponding to edge $g$ that is closer to $e_2$ will not affect $f$ since it will be protected by $e_2$. We claim that the merge corresponding to $e_2$ also doesn't affect $f$.
        
        Firstly, if $r_{e_1'} < r_{e_2}$, then we are done since both $e_1'$ and $e_2$ are adjacent superiors to $f$ but $e_1'$ will be processed before $e_2$ in single-linkage clustering. Hence, $e_1'$ will be the parent of $f$ in the output SLD. Now, assume $r_{e_1'} > r_{e_2}$. We will prove that this is not possible. We know that $e_1'$ and $f$ are present on the path between $e_1$ and $e_2$. Consider the merge corresponding to the edge $e_1'$. Observe that this has to be a compress (none of its endpoints can have degree 1). Since it is a compress, the counterpart edge, say $e_1''$, will have rank greater than $e_1'$. We know that $e_1''$ has to be present on the path between $e_1$ and $e_2$. It cannot be present between $f$ and $e_1'$ (contradiction that $e_1'$ is an adjacent superior). Further, it cannot be present between $f$ and $e_2$ as well (contradiction that $e_2$ is an adjacent superior). Thus, it must be present between $e_1$ and $e_1'$. However, the present compress corresponding to $e_1$ cannot be performed until the merge corresponding to $e_1''$ is performed. If we repeat the same argument for $e_1''$, we will reach the conclusion that either $r_{e_1'} < r_{e_2}$, or $r_{e_1'} < r_{e_1}$, both leading to contradictions.
\end{proof}

The following correspondence is true of \cref{alg:rake_optimized}, barring for the update output step: the filter and insert step corresponds to step~\ref{line:rake_step1}, whereas the meld step corresponds to step~\ref{line:rake_step2}, respectively in \Cref{alg:rake}, and similarly for the optimized version of compress. 
The main difference is that the optimized versions of rake and compress essentially delay the update to the output until the nodes get protected, thus having to update the parent of any node at most once. 

This correspondence helps us handle the race conditions mentioned before, i.e. multiple clusters getting raked/compressed into the same cluster in the same round. 
Using the same notation as before, we perform the filter and insert step at the clusters being raked/compressed, in parallel, to obtain the heaps $H'_{v_1}, H'_{v_2},\ldots, H'_{v_d}$. 
Then, we merge all of these heaps together to obtain the heap $H''_{u}$ by running a parallel reduce operation (with \emph{meld} now as the function). 
Finally, we meld $H_u$ and $H_u''$ to obtain the final spine. 

For performing the merges correctly, we need to ensure that we indeed merge the characteristic spines associated to that merge, as required by \sldmerge. 
Instead, we show that for every merge performed, at least one of the spines will be the exact characteristic spine, and the other spine will be sufficient for the corresponding merge, as stated by the following lemma.

\begin{restatable}{lemma}{sldtreeconheaps}
\label{lem:sldtreecon_heaps}
For any cluster $u$ during tree contraction, $H_u$ stores a spine in the SLD of $G_u$ satisfying one of the following properties:
\begin{enumerate}
    \item $H_u$ contains the characteristic spine corresponding to the next merge involving $u$.
    \item Let $e$ be the characteristic edge in $u$, and $f$ be the characteristic edge in the other subtree corresponding to the next merge involving cluster $u$. Then, $H_u$ contains $\spine{}(e')$ such that $\spine{}(e') \subseteq \spine{}(e)$ and $r_{e'} < r_f$.
\end{enumerate}
\end{restatable}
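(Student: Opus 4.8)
The plan is to prove the lemma by induction on the rounds of parallel tree contraction, treating the statement itself as the invariant maintained at every cluster. The base case is immediate: each initial singleton cluster has an empty heap and (vacuously) an empty characteristic spine, so Property~(1) holds. For the inductive step I would fix a round, assume the invariant for all clusters entering the round, and separately analyze the cluster produced by a $\rake$ and by a $\compress$. The simultaneous (race-condition) merges of several clusters $v_1,\dots,v_d$ into a common $u$ are folded into the same analysis via the reduce-with-meld argument already sketched in the text: by the extension of \cref{lem:star_spine} all the edges $(u,v_i)$ lie on a single spine, so melding the filtered heaps in any order yields the same combined spine, and it suffices to argue the invariant for one representative merge.

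For the $\rake$ case I would show Property~(1) is preserved. When $v$ is raked into $u$ along $e=(u,v)$, the inductive hypothesis gives that $H_v$ stores (at least) the part of $v$'s characteristic spine above $r_e$, so \textsc{filter\_and\_insert}$(e)$ produces exactly $\spine{}_{G_v'}(e)$, and by \cref{claim:protect_rake} the elements removed into $S$ are protected and are finalized correctly. Melding the result into $H_u$ then combines the two characteristic spines meeting at $u$ — which coincide above $e$ since $e$ is incident to $u$ (\cref{lem:star_spine}) — so the new heap is precisely the characteristic spine of the merged cluster at $u$.

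The $\compress$ case is the crux and the step I expect to be the main obstacle. After compressing $v$ into $u$ along the lesser-rank edge $e_1$ (so $r_{e_1}<r_{e_2}$, where $e_2=(v,w)$ stays live), the new cluster has two possible sites for its next merge: the $u$-side, and the new $v$-side boundary carried by $e_2$. The key structural fact I would exploit is that $e_1=(u,v)$ is incident to \emph{both} $u$ and $v$, so by \cref{lem:star_spine} the characteristic spine at $u$ and the characteristic spine at $v$ share every node of rank at least $r_{e_1}$. The heap $H_u=\textsc{Meld}(H_u,H_v')$ retains exactly these shared high-rank nodes (together with $e_1$ and the $u$-side nodes), while the $v$-side nodes of rank below $r_{e_1}$ that are dropped by \textsc{filter\_and\_insert} are protected by \cref{claim:protect_compress}. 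I would then split on where the next merge occurs: if it is on the $u$-side, $H_u$ is precisely the characteristic spine there and Property~(1) holds; if it is along $e_2$, the connecting edge has rank $r_{e_2}>r_{e_1}$, so any node that merge can touch has rank exceeding $r_{e_1}$ and therefore lies in the shared portion already stored, with the retained sub-spine playing the role of $\spine{}(e')$ in Property~(2).

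The main obstacle is making the sufficiency in Property~(2) precise, and here the delicate point is that the far subtree's characteristic edge $f$ need not have rank above $r_{e_1}$: a low-ranked edge can sit behind $e_2$ on the far side. Consequently I would not compare the retained spine against $f$ naively; instead I would argue sufficiency against the connecting edge $e_2$, whose rank exceeds $r_{e_1}$ by the lesser-edge rule, so that the two-step merge (insert $e_2$, then meld) only ever touches nodes of rank above $r_{e_1}$, all of which are retained. The remaining work is to reconcile this with the stated bound $r_{e'}<r_f$ and to verify, via \cref{claim:protect_compress} together with the adjacent-superior structure of \cref{defn:adjsupinf}, that every discarded node is genuinely protected and so never re-enters a later merge. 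Carrying this protection argument up the spine — mirroring the third case of \cref{claim:protect_compress}, where one repeatedly uses that a compress fires only after its higher-ranked counterpart edge has been resolved — is where I expect most of the technical care to concentrate.
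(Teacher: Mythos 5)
Your proposal is correct and follows essentially the same route as the paper's proof: induction on the contraction rounds with the lemma as the invariant, the concurrent rakes/compresses folded in via the reduce-with-meld argument and \cref{lem:star_spine}, rakes preserving Property~(1) up to the sufficiency/equivalence argument for sub-spines, and compresses falling back to Property~(2) via the lesser-rank-edge rule $r_{e_1} < r_{e_2}$. The one obstacle you flag --- that the far subtree's characteristic edge $f$ might have rank below $r_{e_1}$ --- dissolves in the paper's two-step merge decomposition: a merge across $e_2$ always takes the shared vertex to be an endpoint of $e_2$, the only far-side edge incident to that vertex is $e_2$ itself, so $f = e_2$ and $r_{e'} = r_{e_1} < r_{e_2} = r_f$ is exactly the bound Property~(2) asserts; your fallback of ``arguing sufficiency against the connecting edge $e_2$'' is therefore not a deviation from the stated bound but precisely it.
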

\begin{proof}
    We prove the lemma by induction on the tree contraction rounds. 
    Initially, all the clusters are singleton and the heaps are empty, which corresponds to the characteristic spine of the next merge. 
    Let us assume that the clusters at the end of round $k-1$ store spines in their heaps that satisfy one of the properties stated. 
    Now, consider some cluster $u$. 
    At round $k$, if no clusters merge with $u$, we are done. 
    
    Suppose that in round $k$, the clusters $v_1, v_2, \ldots, v_d$ get \emph{raked} into $u$. 
    We first look at the filter and insert step: the single edge $e_i=(u,v_i)$ corresponds to the characteristic spine for its subtree. 
    By induction, in the first case $H_{v_i}$ contains the characteristic spine, in which case the merge is correct and we obtain the $\spine{}(e_i)$ in $H_{v_i}'$. Otherwise, let $\spine{}(f_i)$ be the required characteristic spine. 
    Since we are in the second case, we know that $H_{v_i}$ contains $\spine{}(f_i') \subseteq \spine{}(f_i)$ such that $r_{f_i'} < r_{e_i}$. 
    Observe that the output of merging $\spine{}(e_i)$ and $\spine{}(f_i)$ is equivalent to the output of merging $\spine{}(e_i)$ and $\spine{}(f_i')$, since $\spine{}(f_i') \subseteq \spine{}(f_i)$. 
    Thus, the merge is correct, and we similarly obtain $\spine{}(e_i)$ in $H_{v_i}'$. 
    By \Cref{lem:star_spine}, the characteristic spine of the SLD of $G_u''$ when merging with the SLD of $G_u$ is nothing than the union of $\spine{}(e_i)$ over all $i=1,\ldots,d$. 
    Thus, $H_u''$ stores the characteristic spine of the next merge. 
    By induction, if $H_u$ stores the characteristic spine, the merge is performed correctly. 
    Otherwise too, with a similar (equivalence) argument as before, the merge is correct.

    Now, let us instead consider that round $k$ performs compresses. Let $e_{i_1}=(u,v_i)$ and $e_{i_2}=(v_i,w_i)$, such that $r_{e_{i_1}} < r_{e_{i_2}}$. The same argument, as in the case of rakes, can be extended to work here. Now, we are left to prove that the final computed spine in $H_u$ satisfies one of the stated properties.

    In the future, $u$ can be involved in a merge along the edge $e_{i_2}$. However, the final spine computed in $H_u$ contains $\spine{}(e_{i_1})$, which is a subset of the characteristic spine corresponding to the merge along $e_{i_2}$; the characteristic edge would be the min-rank edge incident at that vertex, but by \Cref{lem:star_spine}, $\spine(e_{i_1})$ will be a subset of the characteristic spine. Thus, if $e_{i_2}$ is the next merge, $H_u$ satisfies property~(2). 
    
    Since we do not delete nodes from $H_u$ until cluster $u$ is raked/compressed, it satisfies property~(2) until then. By induction, $H_u$ will continue to satisfy property~(2) for all other possible future merges determined by compresses (involving $u$) performed in the past. Hence, it will continue to satisfy this property even after round $k$ (this holds even in the case when the round performs rakes).

    Thus, by induction, the heaps stored at any cluster always satisfies one of the two stated properties.
\end{proof}

At the end of tree contraction when we obtain a single cluster, observe that we have a spine stored in the heap associated to that cluster, whose parent nodes are not updated in the output. Since there are no more merges left and the nodes form a spine, we can just sort all the nodes in the heaps and assign parents, identically to the update output step described in \Cref{alg:rake_optimized,alg:compress_optimized}. 
With the algorithm fully described, we will now prove the correctness and work-depth bounds of \sldtreecon.
\begin{theorem}\label{thm:sldtreecon}
    \sldtreecon correctly computes the SLD, and runs in $O(n\log h)$ work and $O(\log^2n\log^2 h)$ depth.
\end{theorem}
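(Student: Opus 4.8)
The plan is to split the argument into three parts — correctness, a work bound, and a depth bound — reusing the structural machinery already in place.

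\textbf{Correctness.} First I would argue that the optimized rake and compress (\cref{alg:rake_optimized,alg:compress_optimized}) faithfully implement the two \sldmerge{} operations of the sub-optimal algorithm, so that correctness reduces to \cref{lem:subopt_sldtreecon}. The supporting facts are already established: \cref{lem:sldtreecon_heaps} guarantees that whenever a cluster $u$ participates in a merge, the spine stored in $H_u$ is either exactly the characteristic spine or a subset $\spine{}(e') \subseteq \spine{}(e)$ with $r_{e'} < r_f$, and in both cases merging against it yields the same result as merging against the true characteristic spine; and \cref{claim:protect_rake,claim:protect_compress} guarantee that the nodes filtered into $S$ are already protected, so writing their final parents in rank order (with the characteristic edge on top) is correct and never revised. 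I would then combine per-operation correctness with induction over the $O(\log n)$ contraction rounds, exactly as in \cref{lem:subopt_sldtreecon} but with the white-box heap implementation, concluding that after the last contraction the single remaining heap stores a spine whose nodes, once sorted and linked in the final cleanup, complete the output. The correctness of \sldmerge{} together with \cref{lem:star_spine} justifies the reduce-with-meld used when several clusters rake/compress into the same $u$ in one round, since all involved edges lie on a common spine.

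\textbf{Work.} The crux of the theorem is the work bound, which rests on one structural claim: \emph{every heap maintained during the algorithm has size at most $h$}. I would prove this by showing that the nodes stored in $H_u$, ordered by rank, form an ancestor chain of a single node in the final dendrogram $D$: by \cref{lem:sldtreecon_heaps} they lie on a spine of $D_u$, and by the correctness of \sldmerge{} together with \cref{lem:sld_subtree}, all later merges only \emph{interleave} higher-rank nodes into this spine without reordering it; hence these nodes are all ancestors of the spine's bottom node in $D$, and since $D$ has height $h$ there are at most $h$ of them. Given this, every \textsc{Insert}, \textsc{Meld}, and per-element filter step costs $O(\log h)$ (\cref{sec:heaps}). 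The total cost then follows by accounting: (i) $n-1$ inserts give $O(n\log h)$; (ii) the total number of melds over all rounds is $O(n)$, since each rake/compress and each reduce step removes one active heap from an initial $n$, for $O(n\log h)$; (iii) each node is filtered out of a heap at most once over the whole execution (it becomes protected with its parent fixed), so $\sum_i k_i = O(n)$ and the filter cost is $\sum_i O(k_i\log h) = O(n\log h)$; and (iv) sorting the sets $S$ and writing parents costs $\sum_i O(|S_i|\log h) = O(n\log h)$ as $\sum_i |S_i| = O(n)$. Summing gives $O(n\log h)$ work.

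\textbf{Depth.} I would bound the depth round-by-round. Parallel tree contraction runs in $O(\log n)$ rounds and its deterministic bookkeeping contributes $O(\log^2 n)$ span. Within a round, resolving the $d$ clusters raking/compressing into a common $u$ is a parallel reduce with \textsc{Meld}: by \cref{lem:star_spine} we have $d \le h$, so the reduce tree has depth $O(\log h)$ and each meld has depth $O(\log h)$, for $O(\log^2 h)$ total; the parallel \textsc{filter\_and\_insert} steps run in $O(\log^2 h)$ depth (\cref{sec:heaps}), and sorting each $S$ and writing parents adds $O(\log^2 h)$. Thus each round has $O(\log^2 h)$ merge-side depth, and multiplying by the $O(\log n)$ rounds (absorbing the $O(\log^2 n)$ contraction overhead) yields the stated $O(\log^2 n \log^2 h)$ bound, which is a loose upper bound.

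\textbf{Main obstacle.} The step I expect to be most delicate is the structural claim that heap sizes never exceed $h$. The work bound is a clean charging argument \emph{once} this is known, but establishing it requires carefully relating a spine of the intermediate SLD $D_u$ to an ancestor chain in the final SLD $D$ — i.e., showing that merges only interleave and never reorder spine nodes — and this is where I would spend the most care, leaning on \cref{lem:sld_subtree} and the \sldmerge{} correctness proof.
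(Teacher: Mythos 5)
Your proposal is correct and follows essentially the same route as the paper: correctness via \cref{lem:sldtreecon_heaps}, \cref{claim:protect_rake,claim:protect_compress} and induction over contraction rounds, an $O(\log h)$-per-protected-node charging argument for the work, and a per-round depth bound dominated by the heap filter and the reduce-with-meld. The only notable difference is that you treat the $O(h)$ heap-size bound as a delicate claim requiring an argument that merges only interleave spine nodes, whereas the paper dispatches it in one sentence from the fact that heap contents always form a spine; your extra care there is a reasonable (and slightly more rigorous) filling-in of that step, not a different approach.
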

\begin{proof}
    The proof idea is similar to \Cref{lem:subopt_sldtreecon}, except the fact that \sldmerge is executed in a white-box manner. By \Cref{lem:sldtreecon_heaps}, every cluster maintains the correct characteristic spines (or a sufficient version of it) in their heaps, and thus rakes and compresses are executed correctly. By \Cref{claim:protect_rake} and \Cref{claim:protect_compress}, the output will be updated correctly. Thus, as tree contraction is a sequence of rakes and compresses, we can apply a simple inductive argument as before to complete the correctness argument.

    We will now analyze the work and depth bounds. 
    Recall that we decided to use parallel binomial heaps since they support both fast meld ($O(\log (s)$ work and depth), where $s$ is the size of the merged heap) and low-depth parallel filter operations ($O(k\log s)$ work and $O(\log^2 s)$ depth, where $k$ is the number of nodes filtered). 
    Note that since at any point the nodes in any heap correspond to some spine, the size of every heap will always be $O(h)$.

    \myparagraph{Work Analysis} 
    If $k$ nodes are protected at any rake/compress, the total work for the filter step will be $O(k\log h)$. 
    The subsequent output SLD update step also performs $O(k\log h)$ work (sorting plus update). 
    Thus, we can charge $O(\log h)$ work to each of the protected nodes. 
    Since each node is protected at most once, the overall work incurred will be $O(n\log h)$. 
    Next, each meld operation requires $O(\log h)$ work. 
    Since every rake/compress operation is associated to exactly one edge of the input tree, we can associate this $O(\log h)$ work to that edge, leading to a total of $O(n\log h)$ work across all edges. 
    Hence, the overall work of the algorithm is $O(n\log h)$.
    
    \myparagraph{Depth Analysis}
    Let us now analyze the depth of the algorithm. The number of rake and compress rounds is at most $O(\log n)$. The cost of spawning threads within a round is $O(\log n)$. The depth of the heap filter step is $O(\log^2h)$, the depth of update output and heap meld is $O(\log h)$ (times $O(\log h)$ for meld due to the \emph{reduce}). Therefore, the overall depth of each rake/compress round is $O(\log n\log^2h)$, giving the overall depth bound of $O(\log^2n\log^2h)$.
\end{proof}
\section{Practical Algorithms}
In this section, we describe two algorithms for dendrogram computation, both of which have strong provable guarantees (both achieve the optimal work bounds we showed in \cref{sec:merging_dendrograms}) but are also implementable and achieve good practical performance.
The first is an \emph{activation-based} algorithm that achieves the optimal work bound of $O(n\log h)$ and has $O(h\log n)$ depth (\cref{sec:paruf}).
The second algorithm is a twist on the tree contraction algorithm described in \Cref{sec:tree_contraction} that first performs tree contraction, and subsequently {\em traces} the tree contraction structure to identify the parent of each edge in the dendrogram (\cref{sec:rctt}). 
It also achieves optimal work, and additionally runs in worst-case poly-logarithmic depth.


\subsection{Activation-Based Algorithm (\paruf{})}\label{sec:paruf}
The sequential Kruskal algorithm processes edges in increasing order of rank, thus emulating the process of single-linkage HAC and building the output dendrogram in a bottom-up fashion, one node at a time. 
We can generalize this approach to safely process multiple edges at the same time by building the dendrogram in a bottom-up fashion, one ``level'' at a time.
This approach is similar to the nearest-neighbor chain algorithm, a well-known technique for HAC \cite{benzecri1982construction} that obtains good parallelism in practice for other linkage criteria such as average-linkage, and complete-linkage~\cite{dhulipala2021hierarchical,sumengen2021scaling, yu2021parchain}.

Indeed, the algorithm we propose can be viewed as an optimized and parallelized version of the sequential algorithm in \cite{dhulipala2021hierarchical}. 
The striking difference between our activation-based algorithm and other nearest-neighbor chain algorithms is that our algorithm is {\em asynchronous} and only requires a single instance of spawning parallelism over the set of edges, whereas all other nearest-neighbor chain implementations we are aware of run in synchronized rounds. 

The following simple but important observation allows us to process multiple edges at a time:
\begin{lemma}[folklore]\label{lem:local_min}
If an edge $e=(u,v)$ is a local minima, i.e. $r_e < r_f$ for all other edges $f$ incident to the clusters containing $u$ and $v$, then the clusters $u$ and $v$ can be safely merged.
\end{lemma}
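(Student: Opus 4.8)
The plan is to read ``safely merged'' as: creating the dendrogram node for $e$ whose two children are the \emph{current} clusters $C_u \ni u$ and $C_v \ni v$ reproduces exactly the merge that the sequential single-linkage process performs at node $e$. By \Cref{lem:sld_subtree}, in the canonical SLD the node $e$ merges the two clusters $\mathcal{I}^u(e)$ and $\mathcal{I}^v(e)$, and directly from \cref{defn:adjsupinf} these are exactly the vertex sets of the connected components of $u$ and $v$ in the subforest of edges of rank smaller than $r_e$. So the whole task reduces to proving $C_u = \mathcal{I}^u(e)$ and, symmetrically, $C_v = \mathcal{I}^v(e)$. (Note $C_u \neq C_v$: since the input is a tree and $e$ is uncontracted, $e$ is a bridge and its endpoints cannot yet lie in a common connected cluster.)

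I would carry this out by maintaining the invariant that every current cluster $C$ is \emph{canonical}: writing $\beta(C)$ for the minimum rank among the boundary edges of $C$ (edges with exactly one endpoint in $C$), all edges internal to $C$ have rank $< \beta(C)$, and $C$ equals the connected component of any of its vertices in the subforest of edges of rank $< \beta(C)$. The local-minimum hypothesis says every boundary edge of $C_u$ and of $C_v$ other than $e$ has rank larger than $r_e$; since $e=(u,v)$ is itself a boundary edge of $C_u$ (as $u \in C_u$, $v \notin C_u$), this pins down $\beta(C_u) = \beta(C_v) = r_e$.

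With $\beta(C_u)=r_e$ in hand the two inclusions follow cleanly. For $C_u \subseteq \on{comp}(u,r_e)$, the invariant gives that every internal edge of $C_u$ has rank $< r_e$, so all of $C_u$ is reachable from $u$ using only such edges. For the reverse inclusion, any rank-$<r_e$ path starting at $u$ cannot leave $C_u$, since leaving would require traversing a boundary edge, and every boundary edge has rank $\ge r_e$. Hence $C_u = \on{comp}(u,r_e) = \mathcal{I}^u(e)$, and symmetrically $C_v = \mathcal{I}^v(e)$, so merging $C_u$ and $C_v$ along $e$ creates precisely node $e$ of the canonical SLD. A short check then re-establishes the invariant for the merged cluster $C_u \cup C_v$: its boundary is the union of the two boundaries with $e$ removed (all of rank $> r_e$), and all of its internal edges now have rank $\le r_e$, matching the new boundary minimum.

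The main obstacle is not any individual inequality but correctly setting up and \emph{maintaining} the canonical-cluster invariant across asynchronous merges, so that the inductive hypothesis (every cluster is canonical) is legitimately available when $e$ is processed; this is what makes the boundary-crossing argument in the reverse inclusion go through. Once the invariant guarantees that all internal ranks of a cluster lie strictly below its boundary minimum, the rest is bookkeeping, which is consistent with this being a folklore statement.
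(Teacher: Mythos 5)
Your proof is correct, but it takes a genuinely different route from the paper's. The paper dispatches this lemma with a one-sentence observation: single-linkage clustering processes edges in increasing rank order, so a local-minimum edge is by definition the next edge touching either of its two clusters that the sequential algorithm would process, and hence merging it now agrees with the sequential execution. You instead never reason about the order of the sequential execution: you characterize the two children of node $e$ in the canonical output via \cref{lem:sld_subtree} as $\mathcal{I}^u(e)$ and $\mathcal{I}^v(e)$, and then prove that the current clusters coincide with these sets by maintaining the canonical-cluster invariant (all internal ranks strictly below the minimum boundary rank, and each cluster equal to the corresponding low-rank component). What your approach buys is rigor precisely where the folklore argument glosses: it shows that the local-minimum condition alone forces $C_u$ to already be the \emph{full} rank-$<r_e$ component of $u$ (your reverse inclusion uses only the hypothesis, the forward inclusion only the invariant), and the invariant is preserved under arbitrary interleavings of merges, which is what one actually needs for the asynchronous \paruf{} setting. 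What the paper's version buys is brevity, appropriate for a folklore fact. One small point: in your re-establishment step, ``all internal edges now have rank $\le r_e$, matching the new boundary minimum'' should say they are strictly below the new boundary minimum (which is $>r_e$); the inequality you need does hold, but ``matching'' is slightly loose.
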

The proof follows due to the fact that single-linkage clustering processes edges in sorted order of ranks, and hence edge $e$ will be processed before all of the other edges incident to its endpoints.
In other words, the edge $e$ is processed only when it becomes the minimum rank edge incident to the clusters on both of its endpoints. 
We also have the following useful lemma.

\begin{lemma}\label{lem:sld_parent}
Let $e=(u,v)$ be a local minima. Then, the parent of node $e$ in the output SLD will correspond to the minimum rank edge incident on the merged cluster $uv$.
\end{lemma}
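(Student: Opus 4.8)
The plan is to prove this by directly invoking the characterization of SLD parents in terms of adjacent superiors, which we have already established implicitly through Lemma~\ref{lem:sld_subtree} and the definition of single-linkage clustering. Recall that the parent of a node $e$ in the output SLD is precisely the edge that merges the cluster $D(e)$ (the subtree rooted at $e$, which by Lemma~\ref{lem:sld_subtree} equals $\mathcal{I}(e) \cup \{e\}$) with a neighboring cluster. The key observation is that when $e=(u,v)$ is processed as a local minimum, the cluster $uv$ is newly formed, and in the sequential single-linkage process the \emph{next} merge that involves this cluster $uv$ is exactly the merge that determines $p(e)$ in the SLD.

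First I would set up the correspondence carefully. By Lemma~\ref{lem:local_min}, since $e$ is a local minimum, all other edges incident to the clusters of $u$ and $v$ have rank strictly greater than $r_e$; these edges become the edges incident to the newly merged cluster $uv$ (after removing $e$ itself, which is now internal). Let $f^\star$ denote the minimum-rank edge incident to $uv$. I would argue that $f^\star$ is the first edge in sorted order that will cause $uv$ to be merged into a larger cluster during the remainder of the single-linkage process, simply because single-linkage processes edges in increasing rank order and $f^\star$ has the smallest rank among all edges leaving $uv$.

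Next I would connect this to the SLD parent relationship. In the dendrogram, $\text{node}(e)$ corresponds to the formation of cluster $uv$, and its parent $p(e)$ corresponds to the very next edge whose contraction absorbs this cluster $uv$ into something larger. Since merges happen in rank order, and $f^\star$ is the minimum-rank boundary edge of $uv$, the cluster $uv$ persists unchanged until $f^\star$ is processed (no smaller-rank edge touches $uv$, by minimality of $f^\star$). Therefore the node created when $f^\star$ is contracted is the immediate ancestor of $\text{node}(e)$, i.e. $p(e) = \text{node}(f^\star)$. This is essentially a restatement of the adjacent-superior characterization: $f^\star$ is an adjacent superior of $e$, and among all adjacent superiors it has the minimum rank, which is exactly the condition for being the SLD parent.

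I expect the main obstacle to be handling a subtle edge case in the ordering argument: one must rule out the possibility that some edge of rank between $r_e$ and $r_{f^\star}$ merges a \emph{different} pair of clusters in a way that changes the boundary of $uv$ before $f^\star$ is reached. The point to nail down is that any such intermediate merge occurs entirely outside the cluster $uv$ (since $f^\star$ is the unique minimum-rank edge on its boundary, no edge of smaller rank has an endpoint in $uv$ that leaves $uv$), so it may alter which larger cluster sits across $f^\star$ but cannot change the identity of $f^\star$ as the first edge to absorb $uv$. Once this is argued, the conclusion that $p(\text{node}(e)) = \text{node}(f^\star)$ follows immediately from the definition of the dendrogram as the record of cluster merges in rank order.
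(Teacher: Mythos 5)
Your proposal is correct and follows essentially the same route as the paper's proof: the parent of $\mathrm{node}(e)$ is the first edge to merge the cluster $uv$ with another cluster, the cluster $uv$ can only be merged via one of its incident edges, and since single-linkage processes edges in increasing rank order that edge is the minimum-rank edge incident to $uv$. Your extra care about intermediate merges occurring entirely outside $uv$ is a sound (if slightly more detailed) elaboration of the same argument, and the framing via adjacent superiors is consistent with but not needed for the paper's one-paragraph proof.
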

\begin{proof}
Once $u$ and $v$ are merged along $e$, let $e_1, e_2, \ldots, e_d$ denote the set of edges incident to the merged cluster $uv$, in sorted order of ranks.
The parent of $e$ is the first edge that merges the cluster $uv$ with some other cluster. 
However, $uv$ can merge only via one of $e_1, e_2, \ldots,$ or $e_d$, and the first edge to be processed among them will be $e_1$, by the definition of single-linkage clustering. 
Therefore, $e_1$ will be the parent of $e$ in the output SLD.
\end{proof}

Based on these observations, we now describe an asynchronous activation-based algorithm that we call \paruf.
We give the pseudocode in \cref{alg:activation_based}.
The idea is natural: when an edge becomes a local minima, we merge it.
We maintain the set of edges incident to a cluster in a meldable min-heap (which we call the \defn{neighbor-heap} of that cluster). 
Each unmerged edge will be present in two neighbor-heaps corresponding to the clusters containing it's endpoints. 
The element at the top of a cluster's heap will correspond to the min-rank edge incident to that cluster. 
We maintain the cluster information using a Union-Find data structure.
Note that due to our strategy of only processing the local minima in parallel, we can use any sequential Union-Find structure with path compression.
To identify if an edge is ready to be processed, i.e. it is a local minima, for each edge $e \in E$ we maintain an integer $status(e)$ value:
\begin{align*}
    \mb{status}(e) = \begin{cases}
        2, & \emph{ready}\\
        1, & \emph{almost ready}\\
        0, & \emph{not ready}\\
        -1, & \emph{inactive}
    \end{cases}
\end{align*}
An edge is {\em ready} if it is at the top of both the neighbor-heaps of its endpoints, i.e. it is a local minima. An edge is {\em almost ready} if it is at the top of the neighbor-heap of only one of it's endpoints. If it is not on top of either neighbor-heaps, the edge is {\em not ready}. Finally, if the edge has already been merged/processed, it is {\em inactive}.

The overall algorithm is given in \Cref{alg:activation_based}.
The updates/accesses to $\mb{status}(e)$ (\cref{{paruf:while}} and \cref{paruf:atomicinc}) must be atomic since it could be updated/accessed by both of it's endpoints simultaneously. 
We 
now show
that the rest of the steps of the algorithm do not have any race conditions, and that the algorithm is efficient:

\begin{algorithm}[ht]
\caption{Activation-based Algorithm (\paruf)}\label{alg:activation_based}
\DontPrintSemicolon
\SetKwFor{parfor}{parfor}{do}{}%
\SetKw{break}{break}%
\AlgoDisplayBlockMarkers\SetAlgoBlockMarkers{}{end}%
$\mb{F} \leftarrow$ Initialize Union-Find with all singleton clusters\; \label{paruf:inituf}
Initialize the output SLD: $\forall e\in E$, $p(e)=e$\; \label{paruf:sldiit}
$\mb{heaps} \leftarrow$ Initialize neighbor heaps\; \label{paruf:initheap}
Initialize $\mb{status}(.)$ values\;\label{paruf:initstatus}


\parfor{each $e \in E$}{\label{paruf:parfor}
    $\mb{cur} \gets e$\;
    \While{$\CAS(\mb{status}(\mb{cur}), 2, -1)$}{\label{paruf:while}
        $(u,v) \gets \mb{cur}$\;
        $(u', v') \gets (\mb{F}.\textsc{Find}(u), \mb{F}.\textsc{Find}(v))$\; \label{paruf:find}
        $w \gets \mb{F}.\textsc{Union}(u', v')$\;\label{paruf:union}
        $\mb{heaps}(u').\textsc{delete\_min}()$\;
        $\mb{heaps}(v').\textsc{delete\_min}()$\; \label{paruf:delmin}
        $\mb{heaps}(w) \gets \textsc{Meld}(\mb{heaps}(u'), \mb{heaps}(v'))$\; \label{paruf:meld}
        \If{$\mb{heaps}(w)$ is empty}{ \label{paruf:break}
            \break\;
        }
        $\mb{new\_cur} \gets \mb{heaps}(w).\textsc{Top}()$\; \label{paruf:heaptop}
        $p(\mb{cur}) = \mb{new\_cur}$\; \label{paruf:setparent}
        $\ati(\mb{status}(\mb{new\_cur}))$\; \label{paruf:atomicinc}
        $\mb{cur} \gets \mb{new\_cur}$\; \label{paruf:updatecur}
    }
}
\end{algorithm}

\begin{restatable}{theorem}{parufwork}
    \paruf correctly computes the SLD, and runs in $O(n\log h)$ work and $O(h\log n)$ depth.
\end{restatable}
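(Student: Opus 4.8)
The plan is to prove correctness by showing that \paruf{} faithfully realizes the sequential single-linkage process, merging only local minima, and then to bound work and depth separately. For correctness I would first argue \emph{safety}: every edge that is claimed (i.e., on which the $\CAS(\mb{status}(\cdot),2,-1)$ on \cref{paruf:while} succeeds) is genuinely a local minimum, and the parent assigned to it on \cref{paruf:setparent} is correct. The first part follows from the invariant that $\mb{status}(e)$ always equals the number of endpoint clusters (under the current Union-Find state) for which $e$ is the minimum-rank incident edge. This is established at initialization and preserved by each merge: the new heap top $\mb{new\_cur}$ of a freshly melded heap was \emph{not} the top of either melded heap (the processed edge $\mb{cur}$ was, and is deleted), so its count among the two merged clusters was $0$ beforehand and the single $\ati$ on \cref{paruf:atomicinc} increments it correctly while the contribution from its other endpoint is untouched. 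When $\mb{status}(e)=2$, $e$ is a local minimum, so by \Cref{lem:local_min} the merge is safe, and by \Cref{lem:sld_parent} the parent $\mb{new\_cur}$ (the min-rank edge incident to the merged cluster) is exactly its SLD parent.

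Next I would establish \emph{race-freedom} and \emph{liveness}. The key observation is that two distinct edges at status $2$ cannot share a cluster: a cluster's heap has a unique minimum (ranks are distinct), so two local minima must involve four pairwise-disjoint clusters. Hence concurrently processed merges touch disjoint Union-Find sets and disjoint heaps, and the only genuinely shared state is the $\mb{status}$ array, which is accessed only through $\CAS$ and $\ati$. Moreover, an edge is claimed only once its status reaches $2$, which requires both of its incident merges to have already completed their melds; this guarantees that following the parent chain ($\mb{cur}\gets\mb{new\_cur}$) never races with an in-progress neighboring merge, and that the $\CAS$ serializes the at most two threads that converge on the same $\mb{new\_cur}$ so that exactly one continues. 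For liveness, I would note that the globally minimum-rank unmerged edge is always a local minimum, so progress is always possible; combined with the fact that each merge immediately activates its parent via $\ati$, an induction on rank shows every non-root edge is eventually claimed exactly once and assigned its correct parent, terminating at the root when a heap empties.

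For the work bound, the crucial structural lemma is that every neighbor-heap has size at most $h$. Because the input is a \emph{tree}, every edge is a bridge, so the two sides of any edge can only be joined by processing that exact edge; consequently no stale (internal) edges ever linger in the heaps and each edge is processed exactly once. Now fix the cluster $C_e$ merged by an edge $e$ and consider any cut edge $f$ of $C_e$. Since $f$ joins a cluster containing all of $C_e$ to the outside, $f$'s node is an ancestor of $e$'s node in the SLD; distinct cut edges yield distinct ancestors, so the cut size of $C_e$ --- which equals its heap size when $e$ is processed --- is at most the number of ancestors of $e$'s node, i.e.\ at most $h$. Hence every $\textsc{Meld}$, $\textsc{delete\_min}$, and $\textsc{Top}$ costs $O(\log h)$; with $n-1$ merges this is $O(n\log h)$ heap work, and the $O(n)$ Union-Find operations contribute $O(n\,\alpha(n)) = O(n\log h)$ (using $h \ge \lfloor\log n\rfloor$), for $O(n\log h)$ total work.

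For the depth bound, the longest chain of sequential dependencies follows the SLD itself: an edge can be claimed only after both of its SLD children have been merged, so the critical path has length at most the height $h$. Each step along this path performs a constant number of heap operations of cost $O(\log h)$ and a Union-Find $\textsc{Find}$ of cost $O(\log n)$, hence $O(\log n)$ per step; together with the $O(\log n)$ depth to spawn the initial threads this yields overall depth $O(h\log n)$. I expect the main obstacle to be the concurrency argument --- precisely formalizing the $\mb{status}$ invariant under interleaved $\CAS$/$\ati$ operations and proving that the asynchronous chain-following simultaneously avoids races and makes progress --- whereas the heap-size-at-most-$h$ lemma, once observed, makes the work bound fall out cleanly.
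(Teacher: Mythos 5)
Your proof is correct and follows essentially the route the paper intends: safety via \Cref{lem:local_min}, parent-correctness via \Cref{lem:sld_parent}, race-freedom from the disjointness of the clusters of distinct local minima, and the $O(h)$ bound on neighbor-heap sizes (equivalently, that a cluster's cut edges all lie on one spine of the SLD) driving both the $O(n\log h)$ work and the $O(h\log n)$ depth along the dendrogram's critical path. The paper defers the formal proof of this theorem, but your argument --- including the $\mb{status}$ counting invariant and the observation that a successful $\CAS$ happens only after both activating melds have completed --- fills in exactly the intended details; no gaps.
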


\myparagraph{Further Optimizations and Implementation}
As we will see in \Cref{sec:experiments}, the height of the resultant SLD is typically large in many instances. 
However, in most cases, the number of nodes in each level of the output dendrogram, as we go upwards, converges to $1$ quickly. 
In other words, the number of local-minima edges is exactly one the majority of the time, rendering ParUF ineffective.
However, we can apply a very simple optimization in this case. 

If the number of local-minima edges is $1$, this means we have to process the edges one-by-one. 
However, we know that they will be processed in sorted order. Thus, when running ParUF, if the number of local-minima edges (or the number of \emph{ready} edges) drops to $1$, we can stop and compute the set of remaining edges, say $E'$. 
Then, we sort $E'$ based on rank, and assign $\mb{parent}[E'[i]] = E'[i+1]$.
In terms of finding out the number of ready edges, a simple approach is to periodically stop and check the count. This optimization provides incredible speed-ups in most of our experiments. 
However, it is not too hard to generate adversarial inputs that have low parallelism, but elude this strategy (for instance, if the output dendrogram has two nodes in each level for the majority of the time.)

\subsection{RC-Tree Tracing Algorithm (\rctt{})}\label{sec:rctt}
Implementing a fast and practical algorithm that computes the SLD by leveraging the properties of \sldmerge is highly non-trivial.
The practical bottleneck of a faithful implementation of \sldtreecon, our merge-based algorithm appears to be the need to maintain meldable heaps supporting the \emph{heap-filter} operation for merging spines.
In this section, we explore a few more structural properties of \sldmerge and parallel tree contraction to design a fast and practical $O(n\log n)$ work and $O(\log^2n)$ depth algorithm for computing the SLD that completely removes the requirement of maintaining the spines. 
The idea is to use the RC-tree representation of the tree contraction process and apply a post-processing {\em tracing} step to compute the final output.

\begin{algorithm}[!t]
\caption{\rctreetracing}\label{alg:rctreetracing}
\DontPrintSemicolon
\SetKwFor{parfor}{parfor}{do}{}%
\AlgoDisplayBlockMarkers\SetAlgoBlockMarkers{}{end}%
Compute the RC-Tree $RCT$.\; \label{rctt:rct}
$\mb{bkts} \gets $ set of empty buckets corresponding to each $u \in V$
\parfor{each $e \in E$}{
    $u \gets$ $\rcnode$ associated to $e$\;
    $u \gets u.\mb{parent}$\;
    $f \gets$ edge associated to $u$\;
    \While{$r_f < r_e$ and $u$ is not the root}{
        $u \gets u.\mb{parent}$\;
        $f\gets$ edge associated to $u$\;
    }
    Add $e$ to bucket corresponding to $u$\;
}
\parfor{each $u \in V$}{
    Let $\mb{bkt}$ be the bucket associated to $\rcnode(u)$\;
    Sort the edges in $\mb{bkt}$ by ranks\;
    Let $e \gets $ edge associated to $\rcnode(u)$\;
    \parfor{each $i=1$ to $\mb{bkt.size}-1$}{
        $p[\mb{bkt}[i]] = \mb{bkt}[i+1]$\;
    }
    \If{$u$ is not root}{
        $p[\mb{bkt}[\mb{bkt.size}]] \gets e$\;
    }
}
\end{algorithm}

\cref{alg:rctreetracing} gives pseudocode for the \rctt{} algorithm.
It first computes the RC-tree associated to the tree contraction performed by \sldtreecon, without computing the output SLD or maintaining any spines (\cref{rctt:rct}).
We note that when a vertex (cluster) is compressed in the RC-tree, it will merge with the neighbor along the lesser rank edge, as required by the algorithm \sldtreecon.

Given the RC-tree, consider some edge $e$. Recall that each edge $e$ is associated to the $\rcnode$ of some vertex $v$ that gets raked or compressed via $e$. 
From the viewpoint of \sldtreecon, $\rcnode(v)$ corresponds to the stage when $e$ gets introduced into some heap. 
Edge $e$ will successively be involved in every rake/compress operation involving the cluster containing $v$ until (if at all) it gets protected (or filtered during a heap-filter operation) by some edge $f$. 
More specifically, $e$ is either protected by the first edge $f$ it encounters during tree contraction, after it's introduction, such that $r_e < r_f$, or it doesn't encounter such an edge and is, consequently, present in the heap at the root $\rcnode$, or in other words, it is protected at the root $\rcnode$. 
Let $u$ denote the $\rcnode$ where edge $e$ gets protected.
A critical observation here is that the set of rake/compress operations that includes $e$ until it becomes protected are associated to the $\rcnode$(s) along the (unique) path between $\rcnode(u)$ and $\rcnode(v)$. 
This is true due to the properties of RC-trees: the set of clusters containing the edge $e$ throughout tree contraction correspond exactly to the set of $\rcnode$s on the path from $\rcnode(v)$ until the root (in order). 
But, in \sldtreecon, $e$ gets filtered out from its heap once it encounters $f$ (or $\rcnode(u)$).

Thus, for each edge $e$, starting from $\rcnode(v)$, we traverse the $O(\log n)$ length path in the RC-tree along the path towards the root until we find $\rcnode(u)$. 
This way, for each $\rcnode$, we can collect all nodes $e$ that were protected at this node. 
This corresponds exactly to the set $S$ obtained by the first step via the \emph{heap-filter} operation; in case of the root, it is the remaining set of nodes in the spine. We can finally post-process these sets by sorting each of them by rank and updating the output SLD same as before. 
The pseudocode of this algorithm, which we refer to as \rctreetracing (\rctt in short), is given in \Cref{alg:rctreetracing}.

\myparagraph{Analysis and Implementation}
\rctreetracing is a simpler algorithm than \sldtreecon{} in the sense that it doesn't require meldable or filterable heaps and, in fact, doesn't require us to perform the actual merges of edges.
The main practical challenge in the algorithm is to maintain a dynamic adjacency list as the input tree contracts due to rakes and compresses.
The RC-tree can easily be computed in $O(n)$ work and $O(\log^2 n)$ depth.
Sorting within buckets to compute the final output runs in $O(n\log h)$ work and $O(\log^2n)$ depth, since the bucket sizes are $O(h)$ (all of these nodes are along some spine). 
The tree tracing step has the most work, i.e. $O(n\log n)$ work and $O(\log^2 n)$ depth, since it requires us to trace the entire height of the RC tree from each node in the worst case, and the height of the RC Tree is $O(\log n)$. However, in terms of experiments, we see that the tracing step is very fast; indeed the bottleneck is the RC tree construction time (see \cref{fig:breakdown}).

\section{Experimental Evaluation}\label{sec:experiments}

In this section we evaluate our parallel implementations for SLD construction and show the following main experimental results:

\begin{itemize}[topsep=0pt,itemsep=0pt,parsep=0pt,leftmargin=8pt]

\item \rctt{} is usually fastest on our inputs, achieving 2.1--132x speedup (16.9x geometric mean) over SeqUF on billion-scale inputs.

\item  \paruf{} achieves 2.1--150x speedup over SeqUF (5.92x geometric mean) on billion-scale inputs, but can be up to 151x slower than SeqUF on adversarial inputs.

\end{itemize}

\myparagraph{Experimental Setup}
Our experiments are performed on a 96-core
Dell PowerEdge R940 (with two-way hyperthreading) with $4\times 2.4\mbox{GHz}$
Intel 24-core 8160 Xeon processors (with 33MB L3 cache)
and 1.5\mbox{TB} of main memory.
Our programs use the work-stealing
scheduler provided by ParlayLib~\cite{parlay20}. 
Our programs are compiled with the
\texttt{g++} compiler (version 11.4) with the \texttt{-O3} flag.

\hide{
\myparagraph{Input/Output Structure}
The input to an SLD kernel is a weighted tree, $\mathcal{T}_{\mathsf{In}} = (V, E, w)$. 
The output to the kernel is a unique dendrogram $\mathcal{D}_{T}$; sequentially the dendrogram can be obtained by sorting the weights, and breaking any equal weights using the lexicographic ordering on the endpoints, i.e., sorting an edge $(u,v,w_{uv})$ by $(w_{uv}, \min(u,v), \max(u,v))$.
All SLD kernels studied in this paper deterministically produce the same output.
}


%
\myparagraph{Inputs}
The \emp{path} input is a path containing $n$ vertices and $n-1$ edges arranged in a path (or chain);
\emp{star} is a star on $n$ vertices where one vertex, the star center, has degree $n-1$, and all other vertices are connected to the center, and have degree $1$;
\emp{knuth} is a tree similar to the dependency structure of the Fischer-Yates-Knuth shuffle~\cite{blelloch2020randomized}, as follows: vertex $i > 0$ picks a neighbor in $[0, i-1]$ uniformly at random and connects itself to it.

We also generate several real-world tree inputs drawn real-world graphs. 
\defn{Friendster} is an
undirected graph describing friendships from a gaming network.\footnote{Source: \url{https://snap.stanford.edu/data/}.}
\defn{Twitter} is a directed graph of the Twitter network, where 
edges represent the follower
relationship~\cite{kwak2010twitter}.\footnote{Source: \url{http://law.di.unimi.it/webdata/twitter-2010/}.}
We build tree inputs for these real-world graphs by 
(1) symmetrizing them if needed
(2) setting the weight of each edge $(u,v)$ to be $\frac{1}{1 + t(u,v)}$, where $t(u,v)$ is the number of triangles incident on the edge $(u,v)$ and
(3) computing a minimum spanning tree.

We build another real-world tree input using the \defn{BigANN} dataset of SIFT image similarity descriptors; we compute the minimum spanning tree over an approximate $k$-nearest neighbor graph over 
a 100 million point subset of the BigANN dataset.\footnote{Source: \url{http://corpus-texmex.irisa.fr/}.}
For our construction, we used an in-memory version of DiskANN algorithm~\cite{diskann} implemented in the ParlayANN library~\cite{parlayann}.

\hide{
\myparagraph{Real-World Trees}
We also generate several real-world tree inputs drawn real-world graphs. We use the following real-world graphs:
\defn{com-DBLP (DB)} is a co-authorship network sourced from the
DBLP computer science bibliography. 
\footnote{Source: \url{https://snap.stanford.edu/data/com-DBLP.html}.}
\defn{YouTube (YT)} is a social-network formed by 
user-defined groups on the YouTube site.
\footnote{Source: \url{https://snap.stanford.edu/data/com-Youtube.html}.}
\defn{LiveJournal (LJ)} is a directed graph of the social network.
\footnote{Source: \url{https://snap.stanford.edu/data/soc-LiveJournal1.html}.}
\defn{com-Orkut (OK)} is an undirected
graph of the Orkut social network.
\defn{Friendster (FS)} is an
undirected graph describing friendships from a gaming network.
Both graphs are sourced from the SNAP dataset~\cite{leskovec2014snap}.\footnote{Source: \url{https://snap.stanford.edu/data/}.}
\defn{Twitter (TW)} is a directed graph of the Twitter network, where 
edges represent the follower
relationship~\cite{kwak2010twitter}.
\footnote{Source: \url{http://law.di.unimi.it/webdata/twitter-2010/}.}
\defn{ClueWeb (CW)} is a web graph from the Lemur project at CMU~\cite{boldi2004webgraph}.
\footnote{Source: \url{https://law.di.unimi.it/webdata/clueweb12/}.}
\defn{Hyperlink (HL)} is a hyperlink graph obtained from the
WebDataCommons dataset where nodes represent web pages~\cite{meusel15hyperlink}.
\footnote{Source: \url{http://webdatacommons.org/hyperlinkgraph/}.}
We use the undirected versions of all of the aforementioned graphs by symmetrizing all directed graphs.

Given a real-world graph, we construct a tree by weighting the edges in the graph using a weight scheme, and computing the minimum spanning tree of the graph using this scheme.
We use two weight schemes for this setup. \emp{InvDeg} computes the weight of a $(u,v)$ edge as $\frac{1}{d(u) + d(v)}$, where $d(u)$ is the degree of node $u$.
\emp{InvTri} sets the weight of a $(u,v)$ edge to be $\frac{1}{1 + T^{+}(u,v)}$ where $T^{+}(u,v)$ is the number of directed triangles closed by the edge $(u,v)$ in Latapy's compact-forward algorithm~\cite{latapy2008main}.\footnote{We use the number of directed triangles, since computing the edge-centric triangle-count for our largest web graphs (with over 200B edges) requires a prohibitive amount of memory and running time, even on a 1.5TB memory machine.}
}

\myparagraph{Weight Schemes}
We consider several different weight-schemes.
The unit \emp{unit} assigns all edges a weight of 1.
The \emp{perm} scheme generates a random permutation of the edges and assigns each edge a weight equal to its index in the random permutation.
The \emp{low-par} scheme is only applicable to paths, and is designed to be adversarial for the ParUF algorithm. This scheme assigns weights in increasing order to the first half of the edges in the path, and assigns weights in decreasing order for the second half of the path.

%

\begin{figure}[!t]
    \centering
    \vspace{-0.5em}
    \includegraphics[width=\columnwidth]{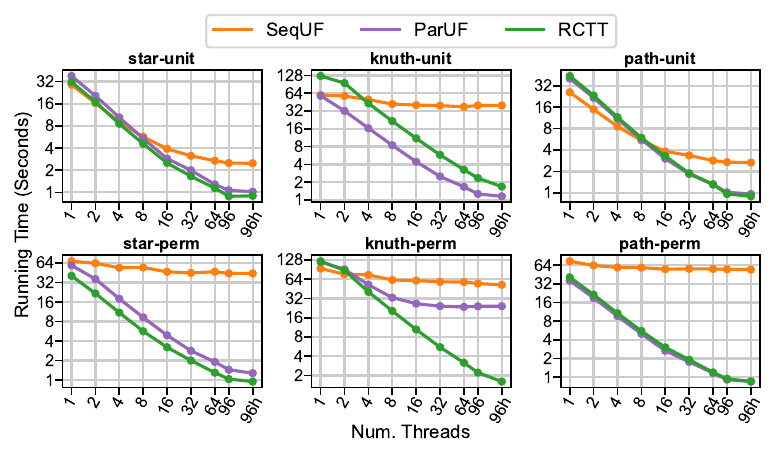}
    \caption{\small Running time of our SLD implementations on different input trees as a function of the number of threads. All inputs contain 100M vertices.
    \label{fig:speedup}}
    \Description{A plot displaying the running time of our SLD implementations on different input trees as a function of the number of threads. All inputs contain 100M vertices.}
    \vspace{-1.2em}
\end{figure}

\begin{figure}[!t]
    \centering
    \vspace{-1.2em}
    \includegraphics[width=\columnwidth]{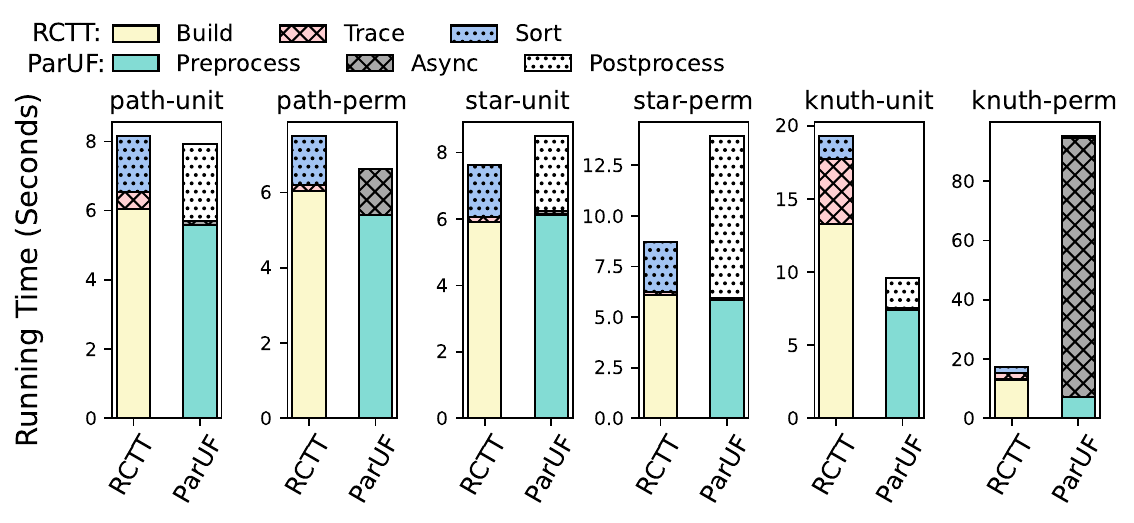}
    \caption{\small Parallel 192 thread running time breakdowns of the \rctt{} and \paruf{} algorithms on billion-scale inputs. 
    For \rctt{}, the Build step corresponds to building an RC tree; the Trace step finds the bucket associated with each edge; and the Sort step sorts each of the buckets.
    For \paruf{}, the Preprocess step sorts the edges and identifies initial local minima; the Async step performs bottom-up clustering; the Postprocess step sorts all remaining edges once the number of local-minima in the Async step becomes $1$.
    \label{fig:breakdown}}
    \Description{Parallel 192 thread running time breakdowns of the RCTT and ParUF algorithms on billion-scale inputs. 
    For RCTT, the Build step corresponds to building an RC tree; the Trace step finds the bucket associated with each edge; and the Sort step sorts each of the buckets.
    For ParUF, the Preprocess step sorts the edges and identifies initial local minima; the Async step performs bottom-up clustering; the Postprocess step sorts all remaining edges once the number of local-minima in the Async step becomes 1.}
    \vspace{-1.2em}
\end{figure}

\subsection{Algorithm Performance}
%

Next, we analyze the performance of our algorithms, including (1) their self-speedup, (2) their speedup over \sequf{}, and (3) the performance breakdown of our algorithms.
%
\cref{fig:speedup} shows the running times of our algorithms on a representative subset of the 100M-scale inputs as a function of the number of threads.

\myparagraph{\boldmath\sequf{}} SeqUF achieves between 1.36--11.6x self-relative speedup (2.94x geometric mean); it achieves the best self-speedups for the star and path graphs using unit weights as shown in \cref{fig:speedup}. 
We emphasize that despite its name, \sequf{} is able to leverage parallelism since the first step in the algorithm is to sort the edges, and we use a highly optimized parallel sort from ParlayLib~\cite{parlay20}.
For all other inputs, in which the input tree or edge weights induce an irregular access pattern, the algorithm achieves poor speedup (under 2x).
In more detail, for star and path graphs using unit weights, the edges merged all lie one after the other in memory, and so the access pattern of the sequential algorithm has good locality.
When the weights are permuted or the input tree has an irregular structure (e.g., in the case of the knuth inputs), the algorithm accesses essentially two random cache-lines in every iteration.

\begin{table}
\footnotesize
\centering
\vspace{-2.5em}
\caption{\small Parallel running times of our SLD implementations on different tree inputs. The last two columns show the speedup of our implementations over SeqUF. \label{table:synthetic}
}
\begin{tcolorbox}[colback=white, enhanced, tabularx={cr|ccc|cc}, width=0.9\linewidth]
Type  & \multicolumn{1}{c|}{Sizes} & SeqUF & ParUF & RCTT & $\frac{\text{SeqUF}}{\text{ParUF}}$ & $\frac{\text{SeqUF}}{\text{RCTT}}$\\
\midrule
\multirow{3}[2]{*}{path} 
      & 10M   & .253  & .120             & \underline{.101} & 2.10 & 2.50 \\
      & 100M  & 2.20  & .962             & \underline{.897} & 2.28 & 2.45 \\
      & 1B    & 21.8  & \underline{7.93} & 8.15             & 2.74  & 2.67 \\
\midrule
\multirow{3}[2]{*}{path perm} 
      & 10M   & 5.56  & \underline{.090} & .099 & 61.7  & 56.1 \\
      & 100M  & 68.6  & \underline{.881} & .904 & 77.8  & 75.8 \\
      & 1B    & 989   & \underline{6.61} & 7.49 & 149.6 & 132  \\
\midrule
\multirow{3}[2]{*}{path low-par} 
      & 10M   & .335  & 41.9  & \underline{.101}  & 0.007 & 3.31 \\
      & 100M  & 2.42  & 366   & \underline{.884}  & 0.006 & 2.73 \\
      & 1B    & 23.4  & 2640  & \underline{7.73}  & 0.008 & 3.02 \\
\midrule
\multirow{3}[2]{*}{star} 
      & 10M   & .252  & .125  & \underline{.111} & 2.01 & 2.27 \\
      & 100M  & 2.04  & 1.03  & \underline{.895} & 1.98 & 2.28 \\
      & 1B    & 20.3  & 8.49  & \underline{7.61} & 2.39 & 2.66 \\
\midrule
\multirow{3}[2]{*}{star perm} 
      & 10M   & 4.71  & .141  & \underline{.116} & 33.4 & 40.6 \\
      & 100M  & 56.1  & 1.29  & \underline{1.01} & 43.4 & 55.5 \\
      & 1B    & 824   & 13.9  & \underline{8.68} & 59.2 & 94.9 \\
\midrule
\multirow{3}[2]{*}{knuth}
      & 10M   & 1.70  & \underline{.140} & .156 & 12.1 & 10.9 \\
      & 100M  & 39.2  & \underline{1.12} & 1.69 & 35.0 & 23.1 \\
      & 1B    & 458   & \underline{9.61} & 19.2 & 47.6 & 23.8 \\
\midrule
\multirow{3}[2]{*}{knuth perm}
      & 10M   & 5.83  & 2.55  & \underline{.155} & 2.28 & 37.6 \\
      & 100M  & 79.9  & 37.7  & \underline{1.61} & 2.11 & 49.6 \\
      & 1B    & 1110  & 95.1  & \underline{17.3} & 11.6 & 64.1 \\
\end{tcolorbox}
\end{table}

\myparagraph{\boldmath\paruf{}} 
\paruf{} achieves between 4.91--50.1x self-relative speedup (30.1x geometric mean). 
In \cref{fig:speedup}, it achieves the lowest speedups on the knuth input with permuted weights due to this input tree resulting in a high height dendrogram ($h = \text{2.5M}$) which is not amenable to our post-processing optimization.
\cref{fig:breakdown}, which shows the performance breakdown of \paruf{} on different inputs shows that on the knuth input with permuted weights, almost all of the time is spent on the asynchronous Union-Find step (the while loop starting on \cref{paruf:while} in \cref{alg:activation_based}).
Although several other inputs have high height (e.g., knuth with unit weights, whose dendrogram forms a path of length $n-1$), they are amenable to the post-processing optimization, and thus \paruf{} achieves good speedup since the post-processing step simply sorts the remaining edges.
\paruf{} typically begins to out-perform \sequf{} with more than 8 threads.
Compared to \sequf{}, as shown in \cref{table:synthetic}, it obtains between 2.1---150x speedup over \sequf{} (5.92x geometric mean speedup) on the billion-scale inputs; however, it performs poorly on the adversarial low-parallelism input (path low-par) it is 151x worse at the 100M scale.

\begin{figure}[!t]
    \centering
    \vspace{-0.5em}
    \includegraphics[width=\columnwidth]{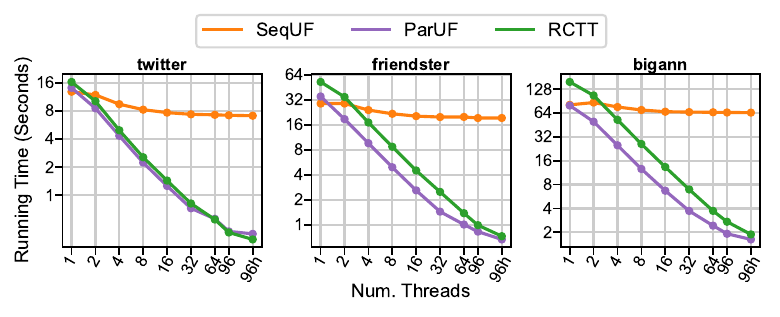}
    \caption{\small Running time of our SLD implementations on different real-world input trees as a function of the number of threads.
    \label{fig:speedup_rw}}
    \Description{A plot displaying the running time of our SLD implementations on different real-world input trees as a function of the number of threads.}
    \vspace{-1.2em}
\end{figure}

\myparagraph{\boldmath\rctt{}} 
The \rctt{} algorithm achieves the most consistent speedups 
among the algorithms studied in this paper, achieving between 35.2--75.5x self-speedup (52.1x geometric mean).
From \cref{fig:speedup} and \cref{table:synthetic}, we can see that \rctt{} is usually the fastest algorithm when using all threads, and is always within a factor of 2x of the performance of \paruf{}.
Similar to \paruf{}, it starts to outperform \sequf{} on all inputs after about 8 threads, and is never slower than \sequf{} on any input, at any of the scales that we evaluated.
Unlike \paruf{}, whose behavior depends on the amount of parallelism available to it and sometimes performs much worse than \sequf{}, \rctt{} always obtains speedups over \sequf{}, achieving between 2.1---132x speedup (16.9x geometric mean speedup) over \sequf{} on the billion-scale inputs.
We can see from \cref{fig:breakdown} that despite the {\em Trace} step being the most costly step in theory (see \cref{sec:rctt}), it takes at most 23\% of the time across our inputs, and is usually only a few percentage of the total running time.
The majority of the time is spent on the RC tree construction; optimizing this step by designing faster tree contraction algorithms is an interesting direction for future work.

\myparagraph{Real-World Inputs}
We also ran our implementations on three real-world tree inputs described earlier in this section (results in \Cref{fig:speedup_rw}).
On these inputs, we observe that \sequf{} achieves modest speedups more similar to the permuted weights, rather than the high self-relative speedup in the unit weight case.
In particular, it achieves between 1.2--1.8x self-relative speedup.
On the other hand, both \paruf{} and \sequf{} achieve strong self-speedups, with \paruf{} achieving between 36--52x self-speedup and \rctt{} achieving between 48.7--84x self-speedup.
Both of our parallel algorithms achieve strong speedups over \sequf{}---on all 192 threads \paruf{} is between 18.4--39.8x faster, and \rctt{} is between 21.1--34.4x faster.

\section{Conclusion}
In this paper, we gave optimal parallel algorithms
for computing the single-linkage dendrogram.
We described a framework for obtaining merge-based 
divide-and-conquer algorithms, and instantiated the
framework using parallel tree contraction, showing that it
yields an optimal work deterministic algorithm with 
poly-logarithmic depth.
We also designed two practical algorithms, \paruf{} and
\rctt{}, both of which have provable guarantees on their work 
and depth, and which achieve strong speedups over a 
highly optimized sequential baseline.
An interesting question is whether we can extend our 
approach to obtain good dynamic algorithms for 
maintaining the single-linkage dendrogram.


\begin{acks}
This work is supported by NSF grants CCF-2103483 and CNS-2317194, NSF CAREER Award CCF-2339310, the UCR Regents Faculty Award, and the Google Research Scholar Program. We thank the anonymous reviewers for their useful comments.
\end{acks}

\bibliographystyle{ACM-Reference-Format}
\bibliography{bibliography/main}


\begin{thebibliography}{44}


\ifx \showCODEN    \undefined \def \showCODEN     #1{\unskip}     \fi
\ifx \showDOI      \undefined \def \showDOI       #1{#1}\fi
\ifx \showISBNx    \undefined \def \showISBNx     #1{\unskip}     \fi
\ifx \showISBNxiii \undefined \def \showISBNxiii  #1{\unskip}     \fi
\ifx \showISSN     \undefined \def \showISSN      #1{\unskip}     \fi
\ifx \showLCCN     \undefined \def \showLCCN      #1{\unskip}     \fi
\ifx \shownote     \undefined \def \shownote      #1{#1}          \fi
\ifx \showarticletitle \undefined \def \showarticletitle #1{#1}   \fi
\ifx \showURL      \undefined \def \showURL       {\relax}        \fi
\providecommand\bibfield[2]{#2}
\providecommand\bibinfo[2]{#2}
\providecommand\natexlab[1]{#1}
\providecommand\showeprint[2][]{arXiv:#2}

\bibitem[Abrahamson et~al\mbox{.}(1989)]%
        {abrahamson1989simple}
\bibfield{author}{\bibinfo{person}{Karl Abrahamson}, \bibinfo{person}{Norm Dadoun}, \bibinfo{person}{David~G. Kirkpatrick}, {and} \bibinfo{person}{T Przytycka}.} \bibinfo{year}{1989}\natexlab{}.
\newblock \showarticletitle{A simple parallel tree contraction algorithm}.
\newblock \bibinfo{journal}{\emph{Journal of Algorithms}} \bibinfo{volume}{10}, \bibinfo{number}{2} (\bibinfo{year}{1989}), \bibinfo{pages}{287--302}.
\newblock


\bibitem[Acar et~al\mbox{.}(2017)]%
        {acar2017brief}
\bibfield{author}{\bibinfo{person}{Umut~A Acar}, \bibinfo{person}{Vitaly Aksenov}, {and} \bibinfo{person}{Sam Westrick}.} \bibinfo{year}{2017}\natexlab{}.
\newblock \showarticletitle{Brief Announcement: Parallel Dynamic Tree Contraction via Self-Adjusting Computation}. In \bibinfo{booktitle}{\emph{{ACM} Symposium on Parallelism in Algorithms and Architectures (SPAA)}}.
\newblock


\bibitem[Anderson(2023)]%
        {anderson2023parallel}
\bibfield{author}{\bibinfo{person}{Daniel Anderson}.} \bibinfo{year}{2023}\natexlab{}.
\newblock \emph{\bibinfo{title}{Parallel Batch-Dynamic Algorithms Dynamic Trees, Graphs, and Self-Adjusting Computation}}.
\newblock \bibinfo{thesistype}{Ph.\,D. Dissertation}. \bibinfo{school}{Carnegie Mellon University}.
\newblock


\bibitem[Baron(2019)]%
        {baron2019machine}
\bibfield{author}{\bibinfo{person}{Dalya Baron}.} \bibinfo{year}{2019}\natexlab{}.
\newblock \bibinfo{title}{Machine Learning in Astronomy: a practical overview}.
\newblock
\newblock
\showeprint[arxiv]{1904.07248}~[astro-ph.IM]


\bibitem[Benz{\'e}cri(1982)]%
        {benzecri1982construction}
\bibfield{author}{\bibinfo{person}{J-P Benz{\'e}cri}.} \bibinfo{year}{1982}\natexlab{}.
\newblock \showarticletitle{Construction d'une classification ascendante hi{\'e}rarchique par la recherche en cha{\^\i}ne des voisins r{\'e}ciproques}.
\newblock \bibinfo{journal}{\emph{Cahiers de l'analyse des donn{\'e}es}} \bibinfo{volume}{7}, \bibinfo{number}{2} (\bibinfo{year}{1982}), \bibinfo{pages}{209--218}.
\newblock


\bibitem[Blelloch et~al\mbox{.}(2020a)]%
        {parlay20}
\bibfield{author}{\bibinfo{person}{Guy~E. Blelloch}, \bibinfo{person}{Daniel Anderson}, {and} \bibinfo{person}{Laxman Dhulipala}.} \bibinfo{year}{2020}\natexlab{a}.
\newblock \showarticletitle{{ParlayLib} - A Toolkit for Parallel Algorithms on Shared-Memory Multicore Machines}. In \bibinfo{booktitle}{\emph{{ACM} Symposium on Parallelism in Algorithms and Architectures (SPAA)}}. \bibinfo{pages}{507–509}.
\newblock
\urldef\tempurl%
\url{https://cmuparlay.github.io/parlaylib/}
\showURL{%
\tempurl}


\bibitem[Blelloch et~al\mbox{.}(2021)]%
        {blelloch18notes}
\bibfield{author}{\bibinfo{person}{Guy~E. Blelloch}, \bibinfo{person}{Laxman Dhulipala}, {and} \bibinfo{person}{Yihan Sun}.} \bibinfo{year}{2021}\natexlab{}.
\newblock \bibinfo{title}{Introduction to Parallel Algorithms}.
\newblock \bibinfo{howpublished}{\url{https://www.cs.cmu.edu/~guyb/paralg/paralg/parallel.pdf}}.
\newblock
\newblock
\shownote{Carnegie Mellon University}.


\bibitem[Blelloch et~al\mbox{.}(2020b)]%
        {blelloch2019optimal}
\bibfield{author}{\bibinfo{person}{Guy~E. Blelloch}, \bibinfo{person}{Jeremy~T. Fineman}, \bibinfo{person}{Yan Gu}, {and} \bibinfo{person}{Yihan Sun}.} \bibinfo{year}{2020}\natexlab{b}.
\newblock \showarticletitle{Optimal Parallel Algorithms in the Binary-Forking Model}. In \bibinfo{booktitle}{\emph{{ACM} Symposium on Parallelism in Algorithms and Architectures (SPAA)}}. \bibinfo{pages}{89–102}.
\newblock
\showISBNx{9781450369350}


\bibitem[Blelloch et~al\mbox{.}(2020c)]%
        {blelloch2020randomized}
\bibfield{author}{\bibinfo{person}{Guy~E. Blelloch}, \bibinfo{person}{Yan Gu}, \bibinfo{person}{Julian Shun}, {and} \bibinfo{person}{Yihan Sun}.} \bibinfo{year}{2020}\natexlab{c}.
\newblock \showarticletitle{Randomized Incremental Convex Hull is Highly Parallel}. In \bibinfo{booktitle}{\emph{{ACM} Symposium on Parallelism in Algorithms and Architectures (SPAA)}}. \bibinfo{pages}{103–115}.
\newblock


\bibitem[Campello et~al\mbox{.}(2015)]%
        {campello2015hierarchical}
\bibfield{author}{\bibinfo{person}{Ricardo~JGB Campello}, \bibinfo{person}{Davoud Moulavi}, \bibinfo{person}{Arthur Zimek}, {and} \bibinfo{person}{J{\"o}rg Sander}.} \bibinfo{year}{2015}\natexlab{}.
\newblock \showarticletitle{Hierarchical density estimates for data clustering, visualization, and outlier detection}.
\newblock \bibinfo{journal}{\emph{ACM Transactions on Knowledge Discovery from Data (TKDD)}} \bibinfo{volume}{10}, \bibinfo{number}{1} (\bibinfo{year}{2015}), \bibinfo{pages}{1--51}.
\newblock


\bibitem[Cormen et~al\mbox{.}(2009)]%
        {CLRS}
\bibfield{author}{\bibinfo{person}{Thomas~H. Cormen}, \bibinfo{person}{Charles~E. Leiserson}, \bibinfo{person}{Ronald~L. Rivest}, {and} \bibinfo{person}{Clifford Stein}.} \bibinfo{year}{2009}\natexlab{}.
\newblock \bibinfo{booktitle}{\emph{Introduction to Algorithms}}.
\newblock


\bibitem[Demaine et~al\mbox{.}(2009)]%
        {demaine2009cartesian}
\bibfield{author}{\bibinfo{person}{Erik~D Demaine}, \bibinfo{person}{Gad~M Landau}, {and} \bibinfo{person}{Oren Weimann}.} \bibinfo{year}{2009}\natexlab{}.
\newblock \showarticletitle{On Cartesian trees and range minimum queries}. In \bibinfo{booktitle}{\emph{Automata, Languages and Programming: 36th International Colloquium, ICALP 2009, Rhodes, Greece, July 5-12, 2009, Proceedings, Part I 36}}. Springer, \bibinfo{pages}{341--353}.
\newblock


\bibitem[Dhulipala et~al\mbox{.}(2021)]%
        {dhulipala2021hierarchical}
\bibfield{author}{\bibinfo{person}{Laxman Dhulipala}, \bibinfo{person}{David Eisenstat}, \bibinfo{person}{Jakub {\L}{\k{a}}cki}, \bibinfo{person}{Vahab Mirrokni}, {and} \bibinfo{person}{Jessica Shi}.} \bibinfo{year}{2021}\natexlab{}.
\newblock \showarticletitle{Hierarchical agglomerative graph clustering in nearly-linear time}. In \bibinfo{booktitle}{\emph{International Conference on Machine Learning}}. PMLR, \bibinfo{pages}{2676--2686}.
\newblock


\bibitem[Feigelson and Babu(1998)]%
        {feigelson1998statistical}
\bibfield{author}{\bibinfo{person}{E.D. Feigelson} {and} \bibinfo{person}{G.J. Babu}.} \bibinfo{year}{1998}\natexlab{}.
\newblock \showarticletitle{Statistical Methodology for Large Astronomical Surveys}.
\newblock \bibinfo{journal}{\emph{Symposium - International Astronomical Union}}  \bibinfo{volume}{179} (\bibinfo{year}{1998}), \bibinfo{pages}{363–370}.
\newblock


\bibitem[Gasperini et~al\mbox{.}(2019)]%
        {gasperini2019genome}
\bibfield{author}{\bibinfo{person}{Molly Gasperini}, \bibinfo{person}{Andrew~J Hill}, \bibinfo{person}{Jos{\'e}~L McFaline-Figueroa}, \bibinfo{person}{Beth Martin}, \bibinfo{person}{Seungsoo Kim}, \bibinfo{person}{Melissa~D Zhang}, \bibinfo{person}{Dana Jackson}, \bibinfo{person}{Anh Leith}, \bibinfo{person}{Jacob Schreiber}, \bibinfo{person}{William~S Noble}, {et~al\mbox{.}}} \bibinfo{year}{2019}\natexlab{}.
\newblock \showarticletitle{A genome-wide framework for mapping gene regulation via cellular genetic screens}.
\newblock \bibinfo{journal}{\emph{Cell}} \bibinfo{volume}{176}, \bibinfo{number}{1} (\bibinfo{year}{2019}), \bibinfo{pages}{377--390}.
\newblock


\bibitem[Gazit et~al\mbox{.}(1988)]%
        {gazit1988optimal}
\bibfield{author}{\bibinfo{person}{Hillel Gazit}, \bibinfo{person}{Gary~L Miller}, {and} \bibinfo{person}{Shang-Hua Teng}.} \bibinfo{year}{1988}\natexlab{}.
\newblock \showarticletitle{Optimal tree contraction in the EREW model}.
\newblock In \bibinfo{booktitle}{\emph{Concurrent Computations: Algorithms, Architecture, and Technology}}. \bibinfo{pages}{139--156}.
\newblock


\bibitem[G{\"o}tz et~al\mbox{.}(2018)]%
        {gotz2018parallel}
\bibfield{author}{\bibinfo{person}{Markus G{\"o}tz}, \bibinfo{person}{Gabriele Cavallaro}, \bibinfo{person}{Thierry G{\'e}raud}, \bibinfo{person}{Matthias Book}, {and} \bibinfo{person}{Morris Riedel}.} \bibinfo{year}{2018}\natexlab{}.
\newblock \showarticletitle{Parallel computation of component trees on distributed memory machines}.
\newblock \bibinfo{journal}{\emph{{IEEE} Transactions on Parallel and Distributed Systems}} \bibinfo{volume}{29}, \bibinfo{number}{11} (\bibinfo{year}{2018}), \bibinfo{pages}{2582--2598}.
\newblock


\bibitem[Gower and Ross(1969)]%
        {Gower1969MST}
\bibfield{author}{\bibinfo{person}{J.~C. Gower} {and} \bibinfo{person}{G.~J.~S. Ross}.} \bibinfo{year}{1969}\natexlab{}.
\newblock \showarticletitle{Minimum Spanning Trees and Single Linkage Cluster Analysis}.
\newblock \bibinfo{journal}{\emph{Journal of the Royal Statistical Society. seriesx C (Applied Statistics)}} \bibinfo{volume}{18}, \bibinfo{number}{1} (\bibinfo{year}{1969}), \bibinfo{pages}{54--64}.
\newblock
\showISSN{00359254, 14679876}


\bibitem[Gu et~al\mbox{.}(2015)]%
        {GSSB15}
\bibfield{author}{\bibinfo{person}{Yan Gu}, \bibinfo{person}{Julian Shun}, \bibinfo{person}{Yihan Sun}, {and} \bibinfo{person}{Guy~E. Blelloch}.} \bibinfo{year}{2015}\natexlab{}.
\newblock \showarticletitle{A Top-Down Parallel Semisort}. In \bibinfo{booktitle}{\emph{{ACM} Symposium on Parallelism in Algorithms and Architectures (SPAA)}}. \bibinfo{pages}{24–34}.
\newblock
\showISBNx{9781450335881}


\bibitem[Hajiaghayi et~al\mbox{.}(2022)]%
        {hajiaghayi2022adaptive}
\bibfield{author}{\bibinfo{person}{MohammadTaghi Hajiaghayi}, \bibinfo{person}{Marina Knittel}, \bibinfo{person}{Hamed Saleh}, {and} \bibinfo{person}{Hsin-Hao Su}.} \bibinfo{year}{2022}\natexlab{}.
\newblock \showarticletitle{{Adaptive Massively Parallel Constant-Round Tree Contraction}}. In \bibinfo{booktitle}{\emph{13th Innovations in Theoretical Computer Science Conference (ITCS 2022)}}, \bibfield{editor}{\bibinfo{person}{Mark Braverman}} (Ed.), Vol.~\bibinfo{volume}{215}. \bibinfo{pages}{83:1--83:23}.
\newblock
\showISBNx{978-3-95977-217-4}
\showISSN{1868-8969}


\bibitem[Havel et~al\mbox{.}(2019)]%
        {havel2019efficient}
\bibfield{author}{\bibinfo{person}{Ji{\v{r}}{\'\i} Havel}, \bibinfo{person}{Fran{\c{c}}ois Merciol}, {and} \bibinfo{person}{S{\'e}bastien Lef{\`e}vre}.} \bibinfo{year}{2019}\natexlab{}.
\newblock \showarticletitle{Efficient tree construction for multiscale image representation and processing}.
\newblock \bibinfo{journal}{\emph{Journal of Real-Time Image Processing}}  \bibinfo{volume}{16} (\bibinfo{year}{2019}), \bibinfo{pages}{1129--1146}.
\newblock


\bibitem[Hendrix et~al\mbox{.}(2013)]%
        {hendrix2013scalable}
\bibfield{author}{\bibinfo{person}{William Hendrix}, \bibinfo{person}{Diana Palsetia}, \bibinfo{person}{Md~Mostofa~Ali Patwary}, \bibinfo{person}{Ankit Agrawal}, \bibinfo{person}{Wei-keng Liao}, {and} \bibinfo{person}{Alok Choudhary}.} \bibinfo{year}{2013}\natexlab{}.
\newblock \showarticletitle{A scalable algorithm for single-linkage hierarchical clustering on distributed-memory architectures}. In \bibinfo{booktitle}{\emph{IEEE Symposium on Large-Scale Data Analysis and Visualization (LDAV)}}. IEEE, \bibinfo{pages}{7--13}.
\newblock


\bibitem[Henry et~al\mbox{.}(2005)]%
        {henry2005cluster}
\bibfield{author}{\bibinfo{person}{David~B Henry}, \bibinfo{person}{Patrick~H Tolan}, {and} \bibinfo{person}{Deborah Gorman-Smith}.} \bibinfo{year}{2005}\natexlab{}.
\newblock \showarticletitle{Cluster analysis in family psychology research.}
\newblock \bibinfo{journal}{\emph{Journal of Family Psychology}} \bibinfo{volume}{19}, \bibinfo{number}{1} (\bibinfo{year}{2005}), \bibinfo{pages}{121}.
\newblock


\bibitem[JaJa(1992)]%
        {JaJa92}
\bibfield{author}{\bibinfo{person}{J. JaJa}.} \bibinfo{year}{1992}\natexlab{}.
\newblock \bibinfo{booktitle}{\emph{Introduction to Parallel Algorithms}}.
\newblock


\bibitem[Kwak et~al\mbox{.}(2010)]%
        {kwak2010twitter}
\bibfield{author}{\bibinfo{person}{Haewoon Kwak}, \bibinfo{person}{Changhyun Lee}, \bibinfo{person}{Hosung Park}, {and} \bibinfo{person}{Sue Moon}.} \bibinfo{year}{2010}\natexlab{}.
\newblock \showarticletitle{What is Twitter, a social network or a news media?}. In \bibinfo{booktitle}{\emph{International World Wide Web Conference (WWW)}}. \bibinfo{pages}{591–600}.
\newblock
\showISBNx{9781605587998}


\bibitem[Letunic and Bork(2007)]%
        {letunic2007interactive}
\bibfield{author}{\bibinfo{person}{Ivica Letunic} {and} \bibinfo{person}{Peer Bork}.} \bibinfo{year}{2007}\natexlab{}.
\newblock \showarticletitle{Interactive Tree Of Life (iTOL): an online tool for phylogenetic tree display and annotation}.
\newblock \bibinfo{journal}{\emph{Bioinformatics}} \bibinfo{volume}{23}, \bibinfo{number}{1} (\bibinfo{year}{2007}), \bibinfo{pages}{127--128}.
\newblock


\bibitem[Manning et~al\mbox{.}(2008)]%
        {irbook}
\bibfield{author}{\bibinfo{person}{Christopher~D Manning}, \bibinfo{person}{Prabhakar Raghavan}, {and} \bibinfo{person}{Hinrich Sch{\"u}tze}.} \bibinfo{year}{2008}\natexlab{}.
\newblock \bibinfo{booktitle}{\emph{Introduction to Information Retrieval}}.
\newblock


\bibitem[Manohar et~al\mbox{.}(2024)]%
        {parlayann}
\bibfield{author}{\bibinfo{person}{Magdalen~Dobson Manohar}, \bibinfo{person}{Zheqi Shen}, \bibinfo{person}{Guy Blelloch}, \bibinfo{person}{Laxman Dhulipala}, \bibinfo{person}{Yan Gu}, \bibinfo{person}{Harsha~Vardhan Simhadri}, {and} \bibinfo{person}{Yihan Sun}.} \bibinfo{year}{2024}\natexlab{}.
\newblock \showarticletitle{ParlayANN: Scalable and Deterministic Parallel Graph-Based Approximate Nearest Neighbor Search Algorithms}. In \bibinfo{booktitle}{\emph{{ACM} Symposium on Principles and Practice of Parallel Programming (PPOPP)}}. \bibinfo{pages}{270–285}.
\newblock
\showISBNx{9798400704352}


\bibitem[Miller and Reif(1985)]%
        {miller1985parallel}
\bibfield{author}{\bibinfo{person}{Gary~L Miller} {and} \bibinfo{person}{John~H Reif}.} \bibinfo{year}{1985}\natexlab{}.
\newblock \showarticletitle{Parallel tree contraction and its application}. In \bibinfo{booktitle}{\emph{{IEEE} Symposium on Foundations of Computer Science (FOCS)}}, Vol.~\bibinfo{volume}{26}. \bibinfo{pages}{478--489}.
\newblock


\bibitem[Moschini et~al\mbox{.}(2017)]%
        {moschini2017hybrid}
\bibfield{author}{\bibinfo{person}{Ugo Moschini}, \bibinfo{person}{Arnold Meijster}, {and} \bibinfo{person}{Michael~HF Wilkinson}.} \bibinfo{year}{2017}\natexlab{}.
\newblock \showarticletitle{A hybrid shared-memory parallel max-tree algorithm for extreme dynamic-range images}.
\newblock \bibinfo{journal}{\emph{IEEE transactions on pattern analysis and machine intelligence}} \bibinfo{volume}{40}, \bibinfo{number}{3} (\bibinfo{year}{2017}), \bibinfo{pages}{513--526}.
\newblock


\bibitem[Nolet et~al\mbox{.}(2023)]%
        {nolet2023cuslink}
\bibfield{author}{\bibinfo{person}{Corey~J. Nolet}, \bibinfo{person}{Divye Gala}, \bibinfo{person}{Alex Fender}, \bibinfo{person}{Mahesh doixjade}, \bibinfo{person}{Joe Eaton}, \bibinfo{person}{Edward Raff}, \bibinfo{person}{John Zedlewski}, \bibinfo{person}{Brad Rees}, {and} \bibinfo{person}{Tim Oates}.} \bibinfo{year}{2023}\natexlab{}.
\newblock \showarticletitle{cuSLINK: Single-Linkage Agglomerative Clustering on the GPU}. In \bibinfo{booktitle}{\emph{ECML PKDD}}. \bibinfo{pages}{711–726}.
\newblock
\showISBNx{978-3-031-43411-2}


\bibitem[Ouzounis(2020)]%
        {ouzounis2020segmentation}
\bibfield{author}{\bibinfo{person}{Georgios~K Ouzounis}.} \bibinfo{year}{2020}\natexlab{}.
\newblock \showarticletitle{Segmentation strategies for the alpha-tree data structure}.
\newblock \bibinfo{journal}{\emph{Pattern Recognition Letters}}  \bibinfo{volume}{129} (\bibinfo{year}{2020}), \bibinfo{pages}{232--239}.
\newblock


\bibitem[Ouzounis and Soille(2012)]%
        {ouzounis2012alpha}
\bibfield{author}{\bibinfo{person}{Georgios~K Ouzounis} {and} \bibinfo{person}{Pierre Soille}.} \bibinfo{year}{2012}\natexlab{}.
\newblock \showarticletitle{The alpha-tree algorithm}.
\newblock \bibinfo{journal}{\emph{JRC Scientific and Policy Report}} (\bibinfo{year}{2012}).
\newblock


\bibitem[Reif and Tate(1994)]%
        {reif94treecontraction}
\bibfield{author}{\bibinfo{person}{John~H. Reif} {and} \bibinfo{person}{Stephen~R. Tate}.} \bibinfo{year}{1994}\natexlab{}.
\newblock \showarticletitle{Dynamic parallel tree contraction (extended abstract)}. In \bibinfo{booktitle}{\emph{{ACM} Symposium on Parallelism in Algorithms and Architectures (SPAA)}}. \bibinfo{pages}{114–121}.
\newblock
\showISBNx{0897916719}


\bibitem[Sao et~al\mbox{.}(2024)]%
        {sao2024pandora}
\bibfield{author}{\bibinfo{person}{Piyush Sao}, \bibinfo{person}{Andrey Prokopenko}, {and} \bibinfo{person}{Damien Lebrun-Grandié}.} \bibinfo{year}{2024}\natexlab{}.
\newblock \bibinfo{title}{PANDORA: A Parallel Dendrogram Construction Algorithm for Single Linkage Clustering on GPU}.
\newblock
\newblock
\showeprint[arxiv]{2401.06089}~[cs.LG]


\bibitem[Sch{\"u}tze et~al\mbox{.}(2008)]%
        {schutze2008introduction}
\bibfield{author}{\bibinfo{person}{Hinrich Sch{\"u}tze}, \bibinfo{person}{Christopher~D Manning}, {and} \bibinfo{person}{Prabhakar Raghavan}.} \bibinfo{year}{2008}\natexlab{}.
\newblock \bibinfo{booktitle}{\emph{Introduction to information retrieval}}.
\newblock


\bibitem[Shun and Blelloch(2014)]%
        {SB14cartesian}
\bibfield{author}{\bibinfo{person}{Julian Shun} {and} \bibinfo{person}{Guy~E. Blelloch}.} \bibinfo{year}{2014}\natexlab{}.
\newblock \showarticletitle{A simple parallel cartesian tree algorithm and its application to parallel suffix tree construction}.
\newblock \bibinfo{journal}{\emph{ACM Trans. Parallel Comput.}} \bibinfo{volume}{1}, \bibinfo{number}{1} (\bibinfo{year}{2014}).
\newblock
\showISSN{2329-4949}


\bibitem[Shun et~al\mbox{.}(2015)]%
        {shun2014sequential}
\bibfield{author}{\bibinfo{person}{Julian Shun}, \bibinfo{person}{Yan Gu}, \bibinfo{person}{Guy~E. Blelloch}, \bibinfo{person}{Jeremy~T. Fineman}, {and} \bibinfo{person}{Phillip~B. Gibbons}.} \bibinfo{year}{2015}\natexlab{}.
\newblock \showarticletitle{Sequential random permutation, list contraction and tree contraction are highly parallel}. In \bibinfo{booktitle}{\emph{{ACM-SIAM} Symposium on Discrete Algorithms (SODA)}}. \bibinfo{pages}{431–448}.
\newblock


\bibitem[Subramanya et~al\mbox{.}(2019)]%
        {diskann}
\bibfield{author}{\bibinfo{person}{Suhas~Jayaram Subramanya}, \bibinfo{person}{Devvrit}, \bibinfo{person}{Rohan Kadekodi}, \bibinfo{person}{Ravishankar Krishaswamy}, {and} \bibinfo{person}{Harsha~Vardhan Simhadri}.} \bibinfo{year}{2019}\natexlab{}.
\newblock \showarticletitle{DiskANN: fast accurate billion-point nearest neighbor search on a single node}. In \bibinfo{booktitle}{\emph{Neural Information Processing Systems (NeurIPS)}}.
\newblock


\bibitem[Sumengen et~al\mbox{.}(2021)]%
        {sumengen2021scaling}
\bibfield{author}{\bibinfo{person}{Baris Sumengen}, \bibinfo{person}{Anand Rajagopalan}, \bibinfo{person}{Gui Citovsky}, \bibinfo{person}{David Simcha}, \bibinfo{person}{Olivier Bachem}, \bibinfo{person}{Pradipta Mitra}, \bibinfo{person}{Sam Blasiak}, \bibinfo{person}{Mason Liang}, {and} \bibinfo{person}{Sanjiv Kumar}.} \bibinfo{year}{2021}\natexlab{}.
\newblock \bibinfo{title}{Scaling Hierarchical Agglomerative Clustering to Billion-sized Datasets}.
\newblock
\newblock
\showeprint[arxiv]{2105.11653}~[cs.LG]


\bibitem[Wang et~al\mbox{.}(2021)]%
        {WangEtAl21}
\bibfield{author}{\bibinfo{person}{Yiqiu Wang}, \bibinfo{person}{Shangdi Yu}, \bibinfo{person}{Yan Gu}, {and} \bibinfo{person}{Julian Shun}.} \bibinfo{year}{2021}\natexlab{}.
\newblock \showarticletitle{Fast Parallel Algorithms for Euclidean Minimum Spanning Tree and Hierarchical Spatial Clustering}. In \bibinfo{booktitle}{\emph{Proceedings of the 2021 International Conference on Management of Data}}. \bibinfo{pages}{1982–1995}.
\newblock
\showISBNx{9781450383431}


\bibitem[Yengo et~al\mbox{.}(2022)]%
        {yengo2022saturated}
\bibfield{author}{\bibinfo{person}{Lo{\"\i}c Yengo}, \bibinfo{person}{Sailaja Vedantam}, \bibinfo{person}{Eirini Marouli}, \bibinfo{person}{Julia Sidorenko}, \bibinfo{person}{Eric Bartell}, \bibinfo{person}{Saori Sakaue}, \bibinfo{person}{Marielisa Graff}, \bibinfo{person}{Anders~U Eliasen}, \bibinfo{person}{Yunxuan Jiang}, \bibinfo{person}{Sridharan Raghavan}, {et~al\mbox{.}}} \bibinfo{year}{2022}\natexlab{}.
\newblock \showarticletitle{A saturated map of common genetic variants associated with human height}.
\newblock \bibinfo{journal}{\emph{Nature}} \bibinfo{volume}{610}, \bibinfo{number}{7933} (\bibinfo{year}{2022}), \bibinfo{pages}{704--712}.
\newblock


\bibitem[Yim and Ramdeen(2015)]%
        {yim2015hierarchical}
\bibfield{author}{\bibinfo{person}{Odilia Yim} {and} \bibinfo{person}{Kylee~T Ramdeen}.} \bibinfo{year}{2015}\natexlab{}.
\newblock \showarticletitle{Hierarchical cluster analysis: comparison of three linkage measures and application to psychological data}.
\newblock \bibinfo{journal}{\emph{The Quantitative Methods for Psychology}} \bibinfo{volume}{11}, \bibinfo{number}{1} (\bibinfo{year}{2015}), \bibinfo{pages}{8--21}.
\newblock


\bibitem[Yu et~al\mbox{.}(2021)]%
        {yu2021parchain}
\bibfield{author}{\bibinfo{person}{Shangdi Yu}, \bibinfo{person}{Yiqiu Wang}, \bibinfo{person}{Yan Gu}, \bibinfo{person}{Laxman Dhulipala}, {and} \bibinfo{person}{Julian Shun}.} \bibinfo{year}{2021}\natexlab{}.
\newblock \showarticletitle{ParChain: a framework for parallel hierarchical agglomerative clustering using nearest-neighbor chain}.
\newblock \bibinfo{journal}{\emph{Proc. VLDB Endow.}} \bibinfo{volume}{15}, \bibinfo{number}{2} (\bibinfo{year}{2021}), \bibinfo{pages}{285–298}.
\newblock
\showISSN{2150-8097}


\end{thebibliography}

\appendix

\section{Related Work}\label{sec:related_work}
Single-linkage clustering has been studied for over half
a century, starting with the early work of Gower and Ross~\cite{Gower1969MST}.
Since then, it has found widespread application in a variety
of scientific disciplines and industrial applications~\cite{yengo2022saturated, gasperini2019genome, letunic2007interactive,ouzounis2012alpha, gotz2018parallel, havel2019efficient, baron2019machine, feigelson1998statistical, henry2005cluster, yim2015hierarchical, irbook}.

\myparagraph{Single-Linkage Dendrogram Algorithms}
In the past decade, due to the importance of single-linkage 
clustering, serious algorithmic consideration of the core
problem of computing a single-linkage dendrogram began.
The work of Demaine et al.~\cite{demaine2009cartesian} gives
an algorithm showing that if the cost of sorting the
edges is ``free'', SLD can be solved in $O(n)$ time. Their algorithm
is based on an nice argument using decremental tree connectivity.
Their linear-work bound is not directly comparable to our results
since they assume the edges are sorted (and thus bypass
comparison lower bounds).

In more recent years, different communities have studied the
SLD problem, and other related hierarchical tree building problems.
A large body of work has come out of the image analysis community,
where SLD is studied under the moniker of ``alpha-tree'' algorithms~\cite{ouzounis2012alpha, ouzounis2020segmentation, moschini2017hybrid, gotz2018parallel}. Unfortunately the algorithms are highly specific to analyzing 2D and 3D data, and parallel algorithms in this domain~\cite{moschini2017hybrid, gotz2018parallel} are not work-efficient and typically not rigorously analyzed in parallel models.

The most relevant related work is the paper of Wang et al.~\cite{WangEtAl21} who recently gave the first work-efficient
parallel algorithm for SLD.
Their algorithm
is randomized and computes the SLD in $O(n \log n)$ expected work 
and $O(\log^2 n \log\log n)$ depth with high probability (\whp{}).
Although their algorithm is work-efficient with
respect to \sequf{}, it 
relies on applying divide-and-conquer over the weights, which is 
implemented using the Euler Tour Technique~\cite{JaJa92}.
Based on private communication with the authors, we understand that due to its complicated nature, this algorithm does not consistently outperform the simple \sequf{} algorithm. 
The authors only released the code for \sequf{} and suggested to always use \sequf{} rather than the theoretically-efficient algorithm.
The algorithm is randomized due to the use of semisort~\cite{WangEtAl21, GSSB15}, and there is no clear way to derandomize it to obtain a deterministic parallel algorithm.

Our paper gives two algorithms that have better work than the algorithm of Wang et al.~\cite{WangEtAl21}, but have worse depth bounds since the depth of our tree contraction algorithm is $O(\log^2 n \log^2 h) = \Omega(\log^2 n (\log\log n)^2)$. 
However, our third algorithm  (\rctt{}) provides a strict improvement over their algorithm, while also being simple and deterministic, since \rctt{} achieves $O(n \log n)$ work and $O(\log^2 n)$ depth in the binary-forking model. 
We note that in the PRAM model, the \rctt{} algorithm requires $O(n\log n)$ work and $O(\log n)$ depth. 
Improving \rctt{} to obtain $O(n\log h)$ work and $O(\log n)$ depth or $O(n)$ work and $\mathsf{polylog}(n)$ depth if the edges are pre-sorted are two interesting directions for future work.

\section{Lower Bound}\label{sec:lb}

\lowerbound*
\begin{proof}
For a given value of $h$, we build a tree with $n/h$ connected
components, each with $h$ elements. The goal of each component
is to solve an independent instance of sorting, which has a
comparison-sorting lower bound of $\Omega(h \log h)$ work~\cite{CLRS} to solve. To create an input tree for our comparison-based SLD algorithm to solve, we connect all elements within each component into a star.
It is not hard to see that after solving SLD, the elements in each star will be totally ordered based on their rank, and so solving SLD on the aforementioned input tree will solve all of the sorting instances.
Since each sorting instance requires $\Omega(h \log h)$ comparisons (work), the $n/h$ instances requires $\Omega(n \log h)$ work in total to solve.
\end{proof}

\end{document}
\endinput